\documentclass[twoside]{LNGAI}

\usepackage{LNGAImacro}

\usepackage[utf8]{inputenc}
\usepackage{graphicx}
\graphicspath{{images/}}

\usepackage{amsmath}
\usepackage{amssymb}
\usepackage{xspace}
\usepackage{fancybox}
\usepackage{enumitem}
\usepackage{makecell}
\usepackage{longtable}
\usepackage{color}
\usepackage{cancel}
\usepackage{relsize}
\usepackage{epstopdf}
\usepackage{colortbl}
\usepackage{hyperref}
\usepackage{fancybox}
\usepackage{blkarray}
\usepackage{listings}
\usepackage{enumitem}
\usepackage{makecell}
\usepackage{tikz}
\usetikzlibrary{arrows,decorations,decorations.text,decorations.markings}
\usetikzlibrary{shapes,trees,positioning}
\usepackage{comment}
\usepackage{longtable}
\usepackage{orcidlink}

\newcommand{\reservedWordTiles}[1]{\ovalbox{\ensuremath{\mathsf{#1}}\xspace}}
\newcommand{\tiles}[1]{\reservedWordTiles{#1}\xspace}

\newcommand{\Soda}{\textsc{Soda}\xspace}
\newcommand{\Tiles}{\ensuremath{\mathsf{Tiles}}\xspace}
\newcommand{\AR}{\ensuremath{\mathsf{AR}}\xspace}

\newcommand{\Acrocpolis}{ACROCPoLis\xspace}
\newcommand{\Acromagat}{AcROMAgAt\xspace}

\newcommand{\fFair}{\ensuremath{\tilde{f_{\mathsf{F}}}}\xspace}
\newcommand{\fFairTauSeqi}{\ensuremath{\tilde{f}_{\tau(\mathsf{F},O)seqi}}\xspace}
\newcommand{\fPowerSet}{\ensuremath{\mathcal{P}}\xspace}
\newcommand{\fOrderedSeq}[1]{\ensuremath{{#1}_{\prec}}\xspace}
\newcommand{\fPerm}{\ensuremath{\mathrm{Perm}}}

\newcommand{\sAgent}{\ensuremath{\mathsf{A}}\xspace}
\newcommand{\sAgentAttribute}{\ensuremath{\mathsf{A_{at}}}\xspace}
\newcommand{\sResource}{\ensuremath{\mathsf{R}}\xspace}
\newcommand{\sResourceAttribute}{\ensuremath{\mathsf{R_{at}}}\xspace}
\newcommand{\sPSOutcome}{\ensuremath{\fPowerSet(\mathsf{\sAgent \times \sResource})}\xspace}
\newcommand{\tFairScen}{\ensuremath{\mathsf{F}}\xspace}
\newcommand{\sMeasure}{\ensuremath{\mathsf{M}}\xspace}
\newcommand{\sZeroOne}{\ensuremath{\{0, 1\}\xspace}}
\newcommand{\sZeroOneInterval}{\ensuremath{[0, 1]\xspace}}
\newcommand{\sBoolean}{\ensuremath{\mathbb{B}}\xspace}
\newcommand{\sNat}{\ensuremath{\mathbb{N}_{0}}\xspace}
\newcommand{\sRational}{\ensuremath{\mathbb{Q}}\xspace}
\newcommand{\sReal}{\ensuremath{\mathbb{R}}\xspace}
\newcommand{\sInteger}{\ensuremath{\mathbb{Z}}\xspace}

\newcommand{\tilestype}[1]{{\ensuremath {\textsf{#1}}}}
\newcommand{\tilesfun}[1]{{\ensuremath {\textsf{#1}}}}

\newcommand{\orcidIDlink}[1]{\href{https://orcid.org/#1}{\orcidlink{#1} \texttt{{https://orcid.org/#1}}}}

\newcommand{\name}[1]{\xspace{\textbf{({#1})}}}

\newtheorem{notation}{Notation}[section]
\newtheorem{concept}{Concept}[section]

\definecolor{oceanblue}{rgb}{0, 0.4824, 0.6549}

\newcommand{\pipelineScale}{1}

\def\lastname{Mendez and Kampik}

\begin{document}

    \begin{frontmatter}
        \title{The AR Fairness Metamodel: A Structured Framework for Fairness Measures}

        \author{Julian~Alfredo~Mendez}\footnote{Julian~A.~Mendez~\orcidIDlink{0000-0002-7383-0529},  \href{mailto:julian.mendez@cs.umu.se}{\texttt{julian.mendez@cs.umu.se}}},
        \author{Timotheus~Kampik}\footnote{Timotheus~Kampik~\orcidIDlink{0000-0002-6458-2252}, \href{mailto:tkampik@cs.umu.se}{\texttt{tkampik@cs.umu.se}}}

        \address{Umeå University, Umeå, Sweden}

        \begin{abstract}
            This paper presents the \AR fairness metamodel, a framework designed to represent, analyze, and compare different fairness scenarios.
            The metamodel considers key elements, such as agents, resources, and their attributes, and enables the systematic definition and comparison of various fairness measures.
            We provide examples involving both discrete and continuous measures, including equality, equity, group fairness, individual fairness, the Gini index, the Theil index, Jain's fairness index, and a detailed fairness measure for Australia's Child Care Subsidy.
            We also explore relationships among group fairness, individual fairness, and envy-freeness, supported by formal proofs.
            At the conceptual modeling level, our approach builds on the \Tiles framework, which offers modular components that can be connected to capture diverse fairness definitions.
            The goal is to make \AR-based fairness definitions practical and adaptable across contexts, providing a clear way to define, compare, and evaluate them.
            An implementation of the \Tiles framework is available as an open-source tool, and can support fairness modeling and evaluation across a wide range of applications.
        \end{abstract}

        \begin{keyword}
            fairness metamodel, formalization of fairness, resource distribution, responsible artificial intelligence
        \end{keyword}
    \end{frontmatter}

    \section{Introduction}
    \label{sec:intro}

    Fairness is a critical consideration across domains such as social policy, economics, and technology.
    Despite its importance, defining and evaluating fairness remains a challenge, as assessments of what is fair often vary across contexts and stakeholders.
    There is no universal definition of fairness, and even seemingly technical decisions can carry nuanced fairness implications~\cite{AlerTubella-2022}.
    In a given scenario, fairness can be addressed by specifying a \emph{fairness measure} that evaluates the distribution of resources among agents.
    Although fairness measures are inherently subjective, they must be rigorously defined in critical contexts.
    This makes it essential to systematize how such measures are specified, from abstract, subjective interpretations to concrete execution, and to support the comparison of fairness definitions.

    This paper introduces the \AR fairness metamodel, designed to represent and analyze fairness measures and scenarios.
    The metamodel extends prior research in~\cite{Mendez.Kampik.Aler.Dignum-2024-SCAI} and serves as a model of models~\cite{Weske-2019}, where each model is an instance of the metamodel.
    Specific instances can then be evaluated to verify whether they comply with a given definition of fairness.
    The metamodel addresses the challenges of defining and evaluating fairness by offering a structured approach.
    It provides an abstract representation of \emph{fairness scenarios}, incorporating key elements such as agents, resources, and attributes, essential components for assessing whether an outcome adheres to a particular fairness definition.

    We employ \Tiles~\cite{Mendez.Kampik.Aler.Dignum-2024-SCAI}, a framework designed to support the \AR fairness metamodel.
    \Tiles consists of modular blocks that can be interconnected to specify fairness definitions.
    Each block is annotated to indicate how it can be connected to other blocks within the framework.
    The combination of the \Tiles framework and the \AR fairness metamodel provides a comprehensive set of tools to model and evaluate fairness across diverse scenarios, with applications in real-world contexts.

    This paper is organized as follows.
    Section~\ref{sec:background} provides an overview of computational models of fairness.
    Section~\ref{sec:metamodel} introduces the \AR fairness metamodel and its components, including identifiers, measures, attributes, and auxiliary functions.
    Section~\ref{sec:operationalization} discusses the structure of the blocks and their graphical notation.
    Section~\ref{sec:discussion} offers a discussion of the \AR fairness metamodel and the \Tiles framework, focusing on their capabilities and limitations.
    Finally, Section~\ref{sec:conclusion} concludes with reflections and directions for future work.

    This paper revises and extends work presented at the LNGAI 2025 workshop~\cite{Mendez.Kampik-2025-LNGAI,Mendez.Kampik-2025-Specification-arxiv}.
    Notably, our extension provides:
    \begin{enumerate}
        \item a concrete example of fairness applied to Child Care Subsidy in Australia;
        \item a discussion of how group fairness and individual fairness can preserve envy-freeness;
        \item an additional fairness measure instantiated within our metamodel, specifically the Theil index~\cite{Conceicao.Ferreira-2000-Theil};
        \item a formal analysis of relationships among several continuous fairness measures that we instantiate;
        \item a detailed practical demonstration of how the \Tiles framework operates, particularly the engineering of new tiles from primitive tiles.
    \end{enumerate}

    \section{Background}
    \label{sec:background}

    The importance of fairness in artificial intelligence (AI) and subfields such as machine learning is widely recognized.
    From a modeling perspective, evaluating fairness requires the ability to identify and quantify unwanted bias, which may lead to prejudice and ultimately discrimination.

    Formalizing fairness contributes to greater transparency in achieving equitable outcomes, benefiting both individuals and the groups they represent.
    Although operationalizing fairness is challenging, efforts to formalize it and automate fairness verification~\cite{Albarghouthi-2017,Albarghouthi-2019} are highly relevant.
    Several quantifiable definitions have been proposed, reflecting legal, philosophical, and social perspectives~\cite{Dwork-2012,Hardt-2016,Joseph-2016,Kearns-2018}.
    However, differing interpretations can inadvertently harm the very groups they aim to protect~\cite{CorbettDavies-2018} or fail to account for intersectionality~\cite{Kearns-2018}.

    Two widely discussed formalizations are \emph{individual fairness} and \emph{group fairness}.
    Individual fairness requires that similar individuals, based on non-protected attributes, receive similar outcomes.
    Group fairness stipulates that protected groups should receive similar outcomes when non-protected factors are equal~\cite{Chouldechova-2017}.
    These notions can conflict~\cite{Binns-2019-IndividualGroupFairness}.
    For example, if two individuals with similar qualifications receive different outcomes solely because they belong to different protected groups, group fairness metrics such as equality of odds or equality of opportunity can be applied to address the disparity.
    In practice, reconciling these notions and managing the associated value trade-offs remains an active research challenge~\cite{DBLP:conf/innovations/DworkHPRZ12,10.1145/3461702.3462621,AlerTubella-2022}.
    Model-based methodologies, such as MBFair~\cite{Ramadan-2025-SSM}, enable the verification of software designs with respect to individual fairness.

    Operational tools for fairness assessment include IBM's AI Fairness 360~\cite{DBLP:journals/ibmrd/BellamyDHHHKLMM19} (AIF360), Microsoft's Fairlearn~\cite{bird2020fairlearn}, and Google's What-if Tool~\cite{DBLP:journals/tvcg/WexlerPBWVW20} (WIT).
    AIF360 is a comprehensive open-source Python library that provides fairness metrics and bias mitigation algorithms, operating across three stages: pre-processing, in-processing, and post-processing.
    Fairlearn includes fairness metrics, bias mitigation algorithms, and fairness dashboards for visual comparisons.
    WIT is visualization-oriented and offers a dashboard to explore counterfactuals, enabling users to ask ``What if this feature changed?''.
    However, these tools focus primarily on the operationalization of fairness measures rather than their definition and analysis.
    To address this limitation, we propose a unified metamodel that supports multiple perspectives on fairness, building on the frameworks \Acrocpolis~\cite{AlerTubella-2023} and \Acromagat~\cite{Mendez.Kampik.Aler.Dignum-2024-SCAI}.
    Our aim is to integrate diverse definitions of fairness into a coherent structure, facilitating consistent evaluation and comparison across scenarios.

    \section{Fairness Metamodel}
    \label{sec:metamodel}

    This section presents \AR, a formal metamodel from which fairness definitions can be instantiated.
    Conceptually, \AR focuses on \emph{agents}, \emph{resources}, their attributes as first-class abstractions, and the resulting \emph{outcomes}.
    The presentation of \AR is accompanied by examples that demonstrate its applicability to fairness assessment in specific scenarios, as well as to the abstract comparison of fairness measures.

    \subsection{Basic Elements}
    \label{subsec:basic}

    A fairness scenario provides the building blocks for relating agents, resources, and their attributes.
    This relation, called an \emph{outcome}, is used to evaluate whether a given scenario adheres to a defined concept of fairness.
    As a prerequisite, we assume two finite background sets: one of (unspecified) \emph{agents}, denoted by $\cal A$, and one of (unspecified) \emph{resources}, denoted by $\cal R$.
    We assume that the two sets are disjoint, i.e., ${\cal A} \cap {\cal R} = \emptyset$.
    The metamodel is defined as follows.

    \begin{definition}
        \normalfont
        \name{Fairness Scenario}
        \label{def:metamodel}
        A \emph{fairness scenario} is a tuple $\tFairScen = \langle \sAgent, \sResource, \sAgentAttribute, \sResourceAttribute \rangle$, such that:
        \begin{itemize}
            \item $\sAgent \subseteq \cal A$ and $\sResource \subseteq \cal R$; both are non-empty;
            \item every \emph{agent attribute} $\mathsf{a_{AT}} \in \sAgentAttribute$ is a function that takes an agent as input; every \emph{resource attribute} $\mathsf{r_{AT}} \in \sResourceAttribute$ is a function that takes a resource as input; the codomains of agent and resource attributes may vary and are specified upon instantiation.
        \end{itemize}
    \end{definition}

    Definitions of quantities are crucial for measuring fairness.
    \begin{notation}
        \normalfont
        \name{Sets of Quantities}
        The set $\sMeasure$ is a placeholder for a set of quantities such as the real numbers ($\sReal$), rational numbers ($\sRational$), integers ($\sInteger$), or natural numbers including 0 ($\sNat$), with operations totally defined on $\sMeasure$.
    \end{notation}

    Relevant attributes include, for example:
    \begin{enumerate}[label=\roman*)]
    \item the utility function $u : \sResource \to \sMeasure$, which returns the value of a resource, and
    \item the need function $q : \sAgent \to \sMeasure$, which returns how much of a resource an agent requires.
    \end{enumerate}
    Given a fairness scenario, we can define fairness measures.
    We denote the \emph{power set} of a set $S$ by $\fPowerSet(S)$.
    \begin{definition}
        \normalfont
        \name{Fairness Measure}
        \label{def:fairness-measure}
        Let $\tFairScen = \langle \sAgent, \sResource, \sAgentAttribute, \sResourceAttribute \rangle$ be a fairness scenario.
        An \emph{outcome} $O$ is an element $O \in \sPSOutcome$.
        When the outcome $O$ is clear from context, we may say that $a$ \emph{receives} $b$ whenever $\langle a, b \rangle \in O$.
        A \emph{fairness measure} $\fFair$ with respect to a fairness scenario $\tFairScen$ is a function $\fFair : \sPSOutcome \to \sZeroOneInterval$.

        Since $\sZeroOne$ is isomorphic to $\sBoolean = \{ false, true \}$, we particularly consider the case when $\fFair (O)$ returns only 0 or 1.
        Our interpretation is that 0 corresponds to $false$ and 1 to $true$, i.e., if $\fFair (O) = 1$, the outcome is fair, and if $\fFair (O) = 0$, it is unfair.
        Generally, we use $\{ false, true \}$ and $\{ 0, 1 \}$ interchangeably.
    \end{definition}
    We now illustrate how the definition of a fairness measure can be applied.
    \begin{notation}
        \normalfont
        \name{Functions by Extension}
        \label{def:functions_by_extension}
        For a function $f : A \to B$, we denote $f$ as a set of pairs $\langle x , y \rangle$ such that $x$ ranges over the elements of $A$ exactly once and $y = f(x)$.
        We also use Iverson bracket notation, where $ f(x) = [P(x)] $ denotes that $f(x) = 1$ if $P(x)$ holds, and $f(x) = 0$ otherwise.
    \end{notation}
    \begin{example}
        \name{Fairness Scenario}
        A group of agents $\sAgent$, namely Alice ($A$), Bob ($B$), Carol ($C$), David ($D$), Eve ($E$), and Frank ($F$), apply for a subsidy.
        Assume there are three types of resources $R_{1}$, $R_{2}$, and $R_{3}$, with utilities $u$ of 10, 20, and 30 respectively.
        The agents' needs are encoded in the function $q$, where $A$ and $D$ need 10, $B$ and $E$ need 20, and $C$ and $F$ need 30 each.
        Suppose it is considered \emph{fair} to give everyone at least one of the two best resources.
        The full instantiation of the fairness scenario and fairness measure is summarized as follows:

        \begin{itemize}
            \renewcommand\labelitemi{}
            \item $\sAgent = \{A, B, C, D, E, F\}$,
            $\sResource = \{ R_{1}, R_{2}, R_{3} \}$,
            \item $\sAgentAttribute = \{ q : \sAgent \to \sNat,$ $ q = \{$ $ \langle A, 10 \rangle,$ $ \langle B, 20 \rangle,$ $ \langle C, 30 \rangle,$ $ \langle D, 10 \rangle,$ $ \langle E, 20 \rangle,$ $ \langle F, 30 \rangle \} \}$,
            \item $\sResourceAttribute = \{ u : \sResource \to \sNat, \ u = \{ \langle R_{1}, 10 \rangle, \langle R_{2}, 20 \rangle, \langle R_{3}, 30 \rangle \} \}$,
            \item $\fFair (O) = [ \forall a \in \sAgent \ (a \text{ receives } R_{2} \text{ or } a \text{ receives } R_{3}) ]$
        \end{itemize}

        Here, $\sAgent$ and $\sResource$ contain the agents and resources respectively, $q$ specifies how much each agent needs, and $u$ specifies the utility of each resource.
        Intuitively, $\fFair (O)$ requires that every agent receives $R_{2}$ or $R_{3}$ (or both).

        Considering two different outcomes:
        $O_{1}= \{ \langle A, R_{3} \rangle,$ $ \langle B, R_{3}\rangle ,$ $ \langle C, R_{3} \rangle,$ $ \langle D, R_{3} \rangle ,$ $ \langle E, R_{3} \rangle ,$ $ \langle F, R_{3} \rangle \}$ and
        $O_{2}= \{ \langle A, R_{3} \rangle,$ $ \langle B, R_{2}\rangle ,$ $ \langle C, R_{1} \rangle,$ $ \langle D, R_{3} \rangle ,$ $ \langle E, R_{2} \rangle ,$ $ \langle F, R_{1} \rangle \}$,
        we see that $\fFair(O_{1}) = 1$: $O_{1}$ is fair; in contrast, $\fFair(O_{2}) = 0$: $O_{2}$ is unfair.
    \end{example}

    The notion of fairness applied in this example does not consider the \emph{needs} of the agents.
    Fairness measures that incorporate needs are presented further below.

    \begin{example}
        \name{Modeling Child Care Subsidy}
        \label{ex:equality-equity-australia}

        A \emph{child benefit} is a social security payment distributed to parents or guardians of children.
        Different countries apply varying rules to allocate these payments.
        For instance, Australia provides Child Care Subsidy (CCS)\footnote{\url{https://www.servicesaustralia.gov.au/child-care-subsidy} (accessed 2026-09-16)}, which helps parents and guardians cover the cost of formal child care.

        We model this scenario as an instantiation of our metamodel.
        First, we identify the elements of the fairness scenario.
        The agents $\sAgent$ are all eligible parents and guardians according to the regulations\footnote{\url{https://www.servicesaustralia.gov.au/who-can-get-child-care-subsidy?context=41186} (accessed 2026-09-16)}.
        The resources $\sResource$ are the available hours of child care subsidized by the scheme, with $\sMeasure = \sNat$ used to measure Australian dollars (AUD).
        We define a utility function $u : \sResource \to \sNat$ to quantify the value of each option, in terms of hours or their equivalent in AUD, given a fixed hourly rate.
        Parents and guardians may require child care hours, modeled by a need function $q : \sAgent \to \sNat$.
        The income function $w : \sAgent \to \sNat$ represents a family's combined annual income.
        The total amount each agent receives in $O$ is determined by the function $r_{O} : \sAgent \to \sNat$.

        In Australia, the CCS provided care to 1,447,460 children across 1,016,920 families, at an average hourly fee of 13.15 AUD, representing a total budget of 3.89 billion AUD in the last quarter of 2024 (October/December)\footnote{\url{https://www.education.gov.au/early-childhood/about/data-and-reports/quarterly-reports/child-care-subsidy-data-report-december-quarter-2024} (accessed 2026-09-16)}.
        Assume the budget is divided equally, with every participant receiving the same amount.
        If divided equally by family, this results in approximately 1,275 AUD per family per month.
        If divided by child, this corresponds to about 896 AUD per child per month.
        Families with one child would prefer the former criterion, while families with multiple children would prefer the latter.
        In practice, the Australian government allows certain vulnerable groups to receive a higher subsidy\footnote{\url{https://www.education.gov.au/early-childhood/families/first-nations-activity-test} (accessed 2026-09-16)}.

        To facilitate equitable distribution, we must consider how each family can receive sufficient funding to pay for caregivers, regardless of whether they receive more than they need.
        For example, if 896 AUD per month is sufficient to cover the cost of one child's care, distributing the budget equally according to the number of children aligns with the principle of equity.
    \end{example}

    \subsection{An Analysis of Equality and Equity}
    \label{subsec:measures}

    We now demonstrate how the fairness metamodel can be applied to formalize and compare two well-known fairness measures: \emph{equality} and \emph{equity}.
    Under equality, every agent receives exactly the same amount of resources.

    \begin{definition}
        \normalfont
        \name{Equality, Equity, and Strict Equity}
        \label{def:equality-equity}

        Let $\tFairScen = \langle \sAgent, \sResource, \sAgentAttribute, \sResourceAttribute \rangle$ be a fairness scenario, let $u \in \sResourceAttribute$ be a utility function, and let $O$ be an outcome.
        The \emph{accumulation of received resources} $r_{O} : \sAgent \to \sMeasure$ is defined as:
        \begin{equation}
            r_{O}(a) = \sum _{\langle a , b \rangle \in O} u (b) \ .
        \end{equation}
        This function sums the utility accumulated by an agent.

        \begin{itemize}
            \item The \emph{equality} fairness measure $\fFair_{eqa}$ is

            \[
                \fFair_{eqa} (O) = [\forall a, a' \in \sAgent \ (r_{O}(a) = r_{O}(a')) ].
            \]

            \item If $q \in \sAgentAttribute$ is the need function, the \emph{equity} fairness measure $\fFair_{eqi}$ is

            \[
                \fFair_{eqi} (O) = [\forall a \in \sAgent \ (r_{O}(a) \geq q(a)) ].
            \]

            \item The \emph{strict equity} fairness measure $\fFair_{seqi}$ is

            \[
                \fFair_{seqi} (O) = [\forall a \in \sAgent \ (r_{O}(a) = q(a)) ].
            \]

        \end{itemize}
    \end{definition}

    In Definition~\ref{def:equality-equity}, equity requires that each agent receives at least as much as they need.
    This notion can be strengthened so that it is violated if an agent receives more than they need.
    We call this stricter version \emph{strict equity}, which constitutes a special case of equity.

    \begin{proposition}
        \normalfont
        \name{Strict Equity Implies Equity}
        \label{lem:equity-strict-equity}
        For a fairness scenario $\tFairScen$, for every outcome $O$, the following implication holds:

        \[
            \fFair_{seqi}(O) = 1 \Rightarrow \fFair_{eqi}(O) = 1 .
        \]

    \end{proposition}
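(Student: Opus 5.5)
The plan is to unfold both measures using Definition~\ref{def:equality-equity} and the Iverson bracket convention from Notation~\ref{def:functions_by_extension}, which turns the claim into a statement about universally quantified predicates over $\sAgent$. Fix a fairness scenario $\tFairScen = \langle \sAgent, \sResource, \sAgentAttribute, \sResourceAttribute \rangle$ with utility function $u \in \sResourceAttribute$ and need function $q \in \sAgentAttribute$, and fix an arbitrary outcome $O \in \sPSOutcome$. Both $\fFair_{seqi}$ and $\fFair_{eqi}$ are defined with respect to the same accumulation function $r_{O}$, so they can be compared pointwise on agents.

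First, assume $\fFair_{seqi}(O) = 1$. By the Iverson bracket convention, this is equivalent to the truth of the predicate $\forall a \in \sAgent \ (r_{O}(a) = q(a))$. Next, take an arbitrary agent $a \in \sAgent$. From the hypothesis we obtain $r_{O}(a) = q(a)$. Since $\sMeasure$ carries the usual order of $\sReal$, $\sRational$, $\sInteger$ or $\sNat$, the relation $\geq$ is reflexive, so equality implies $r_{O}(a) \geq q(a)$. As $a$ was arbitrary, $\forall a \in \sAgent \ (r_{O}(a) \geq q(a))$ holds. Reading the bracket back, this gives $\fFair_{eqi}(O) = 1$.

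There is no genuine obstacle here. The only point needing care is making explicit that $\geq$ on $\sMeasure$ is a reflexive order, so that $x = y$ entails $x \geq y$. The notation for sets of quantities only says that operations are totally defined on $\sMeasure$, so I would state this assumption about the order explicitly for the listed instances. I would also remark that the converse fails: an agent who receives strictly more than its need satisfies equity but not strict equity. This justifies calling strict equity a special case of equity rather than an equivalent notion.
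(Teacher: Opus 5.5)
Your proposal is correct and follows the same route as the paper. It unfolds $\fFair_{seqi}(O) = 1$ to $\forall a \in \sAgent \ (r_{O}(a) = q(a))$, weakens each equality to $r_{O}(a) \geq q(a)$, and reads the bracket back as $\fFair_{eqi}(O) = 1$; your explicit note that $\geq$ on $\sMeasure$ must be reflexive is a small point the paper leaves implicit.
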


    \begin{proof}
        Let $\tFairScen = \langle \sAgent, \sResource, \sAgentAttribute, \sResourceAttribute \rangle$ be a fairness scenario, and let $q : \sAgent \to \sMeasure$ be the need function with $q \in \sAgentAttribute$.
        If $\fFair_{seqi}(O) = 1$, then by definition $\forall a \in \sAgent \ (r_{O}(a) = q(a))$.
        Consequently, it follows that $\forall a \in \sAgent \ (r_{O}(a) \geq q(a))$, and therefore $\fFair_{eqi}(O) = 1$.
    \end{proof}

    More interestingly, equality can be reduced to strict equity by stipulating that the equally distributed resources are sufficient to satisfy the agents' needs.
    In other words, if identical amounts are distributed to every agent and each requires the same amount, then it is possible to \emph{reduce} a fairness scenario to another one containing a need function that satisfies all agents.
    As a prerequisite, we fix background sets of fairness scenarios $\cal F$ and outcomes $\cal O$.

    \begin{proposition}
        \normalfont
        \name{Equality Reduced to Strict Equity}
        \label{lem:equality-strict-equity}
        For every fairness scenario $\tFairScen$ and outcome $O$, there exists a function $\tau : \cal F \times \cal O \to \cal F$, such that

        \[
            \fFair_{eqa}(O) = 1 \Longleftrightarrow \fFairTauSeqi(O) = 1.
        \]

    \end{proposition}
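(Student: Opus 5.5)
The idea is to build $\tau$ explicitly. Given $\tFairScen = \langle \sAgent, \sResource, \sAgentAttribute, \sResourceAttribute \rangle$ and $O$, let $\tau(\tFairScen, O)$ keep the same agents, resources and resource attributes (in particular the utility $u$). It only replaces the agent attributes by $\sAgentAttribute \cup \{q'\}$, where $q'$ becomes the designated need function. So that the statement reads correctly, the subscript $\tau(\mathsf{F},O)seqi$ in $\fFairTauSeqi$ denotes the strict-equity measure evaluated in the transformed scenario with need function $q'$. Since $u$ and $O$ are unchanged, $r_O$ is the same function in both scenarios.

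The key choice is $q'$. Because $\sAgent$ is non-empty, I fix some agent $a_0 \in \sAgent$, using a fixed choice, for instance a fixed enumeration of the finite background set $\cal A$, so that $\tau$ is a genuine function. I then set $q'(a) = r_O(a_0)$ for every $a \in \sAgent$. This is the constant need function expressing ``every agent needs what $a_0$ received''. It takes values in $\sMeasure$, as an agent attribute should.

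I then check both directions.
\begin{itemize}
\item[($\Rightarrow$)] If $\fFair_{eqa}(O)=1$, then $r_O(a) = r_O(a_0) = q'(a)$ for all $a \in \sAgent$, so the strict-equity measure in $\tau(\tFairScen,O)$ returns 1.
\item[($\Leftarrow$)] If strict equity holds in $\tau(\tFairScen,O)$, then for any $a, a' \in \sAgent$ we have $r_O(a) = q'(a) = r_O(a_0) = q'(a') = r_O(a')$. Hence $\fFair_{eqa}(O)=1$.
\end{itemize}
Both directions are one-line unfoldings of Definition~\ref{def:equality-equity}.

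The only real obstacle is bookkeeping rather than mathematics. The statement quantifies ``for every $\tFairScen$ and $O$ there exists $\tau$'', and $\tau$ is allowed to depend on $O$. So I must make sure that $q'$ depends only on $(\tFairScen, O)$, that the choice of $a_0$ is well defined (non-emptiness of $\sAgent$), and that the transformed tuple is still a fairness scenario in the sense of Definition~\ref{def:metamodel}. A further detail is that if $\sAgentAttribute$ already contains a need function $q$, I replace it by $q'$ rather than adding a second one, so that ``the need function'' remains unambiguous.
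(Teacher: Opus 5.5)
Your proposal is correct and follows the same approach as the paper's proof. The paper also defines $\tau(\tFairScen,O)$ by adding to the agent attributes the constant need function $q(a) := r_{O}(a_{0})$, for an arbitrary $a_{0} \in \sAgent$, and then unfolds both directions of Definition~\ref{def:equality-equity}; your extra care in fixing $a_{0}$ deterministically and in replacing, rather than adding, an existing need function is harmless bookkeeping that the paper leaves implicit.
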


    \begin{proof}
        Let $\tFairScen = \langle \sAgent, \sResource, \sAgentAttribute, \sResourceAttribute \rangle$ be a fairness scenario and $O$ an outcome.
        Choose an arbitrary element $a_{0} \in \sAgent$, which is well-defined because $\sAgent$ is non-empty, and define the function $q : \sAgent \to \sMeasure$ as $q(a) := r_{O}(a_{0})$ (i.e., $q$ is fixed independently of input $a$).
        Also, let $\tau(\tFairScen, O) = \langle \sAgent, \sAgentAttribute \cup \{q\}, \sResource, \sResourceAttribute \rangle$.

        ($\Rightarrow$) Assume that $\fFair_{eqa}(O) = 1$ and choose $a, a' \in \sAgent$.
        Then, by definition $r_{O}(a) = r_{O}(a')$, specifically $r_{O}(a) = r_{O}(a_{0})$.
        By definition of $q$, it holds that $q(a) = r_{O}(a_{0})$.
        Since $a$ is arbitrary, we have $\forall a \in \sAgent$, $r_{O}(a) = q(a)$, and therefore $\fFairTauSeqi(O) = 1$.

        ($\Leftarrow$) Assume that $\fFairTauSeqi(O) = 1$ and choose $a \in \sAgent$.
        Then, by definition $r_{O}(a) = q(a)$.
        By definition of $q$, it holds that $q(a) = r_{O}(a_{0})$.
        Since $a$ is arbitrary, $\forall a, a' \in \sAgent$ we have $r_{O}(a) = r_{O}(a')$, and therefore $\fFair_{eqa}(O) = 1$.
    \end{proof}

    In summary, all agents require what an arbitrary agent receives.
    Only if all agents receive the same amount are their needs strictly met.

    \begin{example}
        \name{Equity in Child Care Subsidy}
        \label{ex:equity-in-ccs}
        Building on the Australian child care subsidy scenario presented in Example~\ref{ex:equality-equity-australia}, we observe that strict equity is more nuanced than equality, since every family must receive exactly what is needed to pay for child care.
        The complexity arises because caregivers may charge differently depending on location, qualifications, availability, or by grouping multiple children within the same household.
        The Australian Government adopts an equity-based approach, as the amount provided varies according to the family's combined income\footnote{\url{https://www.servicesaustralia.gov.au/your-income-can-affect-child-care-subsidy?context=41186} (accessed 2026-09-16)}, summarized in Table~\ref{tab:ccs-according-to-income}.
        A payment of 100\% corresponds to 400 AUD per week.
        Considering the combined annual income $w$, which is an agent attribute, the need function $q$ can be defined as:

        \[
            q (a) = 400 \cdot \max \left(\frac{90 - \frac{\max(w(a) - 83280 , 0)}{5000}}{100}, \ 0 \right)
        \]

        Families are expected to pay the remaining amount per week, summing to 420 AUD.
        These values apply to partnered families with a single child\footnote{\url{https://www.startingblocks.gov.au/child-care-subsidy-calculator} (accessed 2026-09-16)}.
        A family can receive at most 360 AUD per week, equivalent to 1,543 AUD per month, which contrasts with the equality cases of 896 AUD per child or 1,275 AUD per family.

        \begin{longtable}{|p{40mm}|p{40mm}|}
            \caption{Percentage of the Australian Child Care Subsidy according to family income.}
            \label{tab:ccs-according-to-income} \\
            \hline
            \textbf{Family income} & \textbf{Child Care Subsidy percentage} \\
            \hline
            0 to 83,280 AUD        & 90\%                                   \\
            \hline
            More than 83,280 AUD to below 533,280 AUD & Between 90\% and 0\%.
            The percentage decreases by 1 point for every 5,000 AUD earned above 83,280 AUD \\
            \hline
            533,280 AUD or more    & 0\%                                    \\
            \hline
        \end{longtable}

        Under equality, every agent receives the same amount of resources, regardless of individual circumstances.
        Under equity, each agent receives at least the amount they need, independent of what others receive.
        Under strict equity, each agent receives \emph{exactly} what they need.
        These measures are defined in Definition~\ref{def:equality-equity}.
    \end{example}

    \subsection{Preferences}
    \label{subsec:preferences}

    In the examples above, we used quantitative functions to evaluate fairness in the distribution of resources among agents.
    However, qualitative functions can also serve this purpose.
    Attributes may be used to determine preferences and thus define fairness measures.
    Note that qualitative functions can also be used to group agents into categories.

    \begin{definition}
        \normalfont
        \name{Preference Function}
        \label{def:preference}
        A \emph{preference function} $v$ is a function $v : \sAgent \to \fPowerSet(\sResource \times \sResource)$, such that for each agent $a$, $v(a)$ determines a complete and transitive binary relation on $\sResource$~\cite{Osborne.Rubinstein-2020-Models}.
        For each pair $\langle b_{1}, b_{2} \rangle \in v(a)$, the interpretation is ``agent $a$ weakly prefers resource $b_{2}$ to resource $b_{1}$,'' or equivalently, ``$b_{2}$ is at least as preferred as $b_{1}$.''

        An \emph{ordinal preference function} $\sigma$ is a function $\sigma : \sAgent \to \fPerm(\sResource)$, where $\fPerm(\sResource) = \{ (b_{1}, b_{2}, \ldots, b_{n}) \mid \{b_{1}, b_{2}, \ldots, b_{n}\} = \sResource\}$, such that for each agent it defines a ranking (a strict total preference order) over all resources, ordered from most preferred to least preferred.
        The notation $b_{1} \succ_{a} b_{2}$ indicates that $b_{1}$ precedes $b_{2}$ in $\sigma(a)$, and is read as ``$a$ prefers $b_{1}$ over $b_{2}$.''

        This particular representation does not allow for \emph{ties}, i.e., situations in which an agent is indifferent between two resources.
        To accommodate ties, the definition of ranking can be relaxed to a weak order, where every pair is comparable but some pairs may be considered equally good.

        A \emph{preference score} for agents $\sAgent$ in resources $\sResource$ is a function $u : \sAgent \times \sResource \to [0, 1]$, such that each agent $a$ assigns a utility value to each resource $r$, placing all resources in total order.
        This score allows for ties and models the intensity of each agent's preference for each resource.
    \end{definition}

    \begin{example}
        \name{Ordinal Preferences}
        \label{ex:preferences}
        Assume that $\sAgent = \{A, $ $B, $ $C, $ $D, $ $E \}$, and that each agent can receive a jacket that is either \emph{large} ($L$) or \emph{small} ($S$), i.e., $\sResource = \{ L, S\}$.
        Each agent chooses jackets in some order of preference.
        To model these preferences, we define the attribute $v : \sAgent \to \fPerm(\sResource)$ as

        \[
            v = \{ \langle A, (S, L) \rangle, \langle B, (S, L) \rangle, \langle C, (L, S) \rangle, \langle D, (L, S) \rangle, \langle E, (L, S) \rangle \}.
        \]

        In this case, agent $B$ would be \emph{satisfied} with a large jacket but would \emph{prefer} a small one, while agent $C$ prefers a large jacket but would accept a small one.
        A fairness measure can require that every agent receives at least one resource from their preference list:

        \[
            \fFair (O) = [\forall a \in \sAgent , \exists b \in \sResource \text{ such that } a \text{ receives } b].
        \]

        The agents may be satisfied with the outcome

        \[
            O = \{ \langle A, S \rangle, \langle B, L \rangle, \langle C, S \rangle, \langle D, L \rangle, \langle E, L \rangle \},
        \]

        but this does not prevent agent $C$ from envying $B$'s jacket, or $B$ from envying $C$'s jacket.
        If they exchange jackets, yielding

        \[
            O = \{ \langle A, S \rangle, \langle B, S \rangle, \langle C, L \rangle, \langle D, L \rangle, \langle E, L \rangle \},
        \]

        i.e., no agent envies another agent's jacket.
        The concept in which no agent envies the outcome of another is called \emph{envy-freeness}~\cite{Foley-1967-ResourceAllocation,Amanatidis-2018-ComparingEnvyFreeness,Richter.Rubinstein-2020-EconomicTheory,Li-2024-PriceEnvyFreeness}.
        We can define a new fairness measure $\fFair (O)$ considering what we call \emph{weak envy-freeness}, where each agent receives at least one resource that they prefer over any resource received by another agent:

        \begin{itemize}
            \renewcommand\labelitemi{}
            \item $\sAgent = \{A, B, C, D, E\}$,
            $\sResource = \{ L, S \}$,
            \item $\sAgentAttribute = \{ v : \sAgent \to \fPerm(\sResource), v = \{ \langle A, (S, L) \rangle,$ $ \langle B, (S, L) \rangle,$ $ \langle C, (L, S) \rangle,$ $ \langle D, (L, S) \rangle,$ $ \langle E, (L, S) \rangle \} \}$,
            \item $\sResourceAttribute = \emptyset$,
            \item $\fFair (O) = [ $ $\nexists a,$ $ a' \in \sAgent, $ $b' \in \sResource $ such that $ a' \text{ receives } b',$ $a \text{ does not receive } b',$ and $\forall b \in \sResource \text{ with } a \text{ receiving } b,$ $\ b' \succ_{a} b ]$.
        \end{itemize}

        Here, $v$ represents the resource preferences of each agent, and no resource attribute is required.
    \end{example}

    The change in outcomes in Example~\ref{ex:preferences} constitutes a \emph{Pareto improvement}, since at least one agent is better off without making anyone else worse off.
    An outcome is \emph{Pareto optimal} when no further Pareto improvement can be applied.

    \subsection{Group and Individual Fairness}
    \label{subsec:group-and-individual}

    \emph{Group fairness} ensures that different demographic groups, which may have protected attributes such as race or gender, receive similar outcomes.
    It focuses on statistical parity across groups.
    \emph{Individual fairness} ensures that similar individuals receive similar outcomes, emphasizing consistency in treatment based on relevant features regardless of group membership.
    Although group fairness and individual fairness may appear to conflict~\cite{Binns-2020-ConflictIndividualGroupFairness}, they can be understood as complementary.

    Let us assume that an attribute can be considered either \emph{relevant} or \emph{irrelevant} in determining the distribution of a resource.
    According to group fairness, if an attribute $p$ is irrelevant, the groups with attribute $p$ should receive the same amount as the groups without attribute $p$.
    According to individual fairness, if an attribute $q$ is relevant, agents with similar values of $q$ should be treated similarly.

    We illustrate this with the following example.

    \begin{example}
        \name{Group and Individual Fairness}
        Assume that a group of agents $\sAgent = \{A, B, C, D, E, F \}$ apply for a loan ($L$), and that there is a protected demographic attribute $p$, which should be irrelevant for the loan application.
        Suppose that only $D$, $E$, and $F$ have this attribute, creating two demographic groups: $G_{\lnot p} = \{ A, B, C\}$ and $G_{p} = \{D, E, F \}$.

        Under group fairness, agents belonging to different demographic groups should be treated similarly, regardless of group membership.
        Assuming that 2 out of 3 loan applications are accepted, this proportion should hold in both $G_{\lnot p}$ and $G_{p}$.
        For example, if the applications of $A$, $B$, $E$, and $F$ are accepted while $C$ and $D$ are rejected, group fairness is satisfied.

        Under individual fairness, if two applicants have nearly identical values for critically relevant attributes, they should receive similar treatment.
        Suppose $q$ is an essential attribute for determining loan eligibility, and that only $B$, $C$, $E$, and $F$ possess this attribute.
        If $D$ receives the loan and $F$ does not, individual fairness is violated in this outcome.

        We formalize this example as follows:

        \begin{itemize}
            \renewcommand\labelitemi{}
            \item $\sAgent = \{A, B, C, D, E, F \}$,
            $\sResource = \{ L \}$,
            \item $\sAgentAttribute = \{ p : \sAgent \to \sBoolean$, $p = \{ $ $\langle A , false \rangle, $ $\langle B , false \rangle, $ $\langle C , false \rangle, $ $\langle D , true \rangle, $ $\langle E , true \rangle, $ $\langle F , true \rangle \}$, $q : \sAgent \to \sBoolean$, \\ $q = \{ \langle A , false \rangle, $ $\langle B , true \rangle, $ $\langle C , true \rangle, $ $\langle D , false \rangle, $ $\langle E , true \rangle, $ $\langle F , true \rangle \} \}$,
            \item $\sResourceAttribute = \emptyset$, \\
            $\varepsilon = 10^{-2}, \ \cdot \simeq_{\varepsilon} \cdot : \sMeasure \times \sMeasure \to \sBoolean$, \\
            $a \simeq_{\varepsilon} b =
            \begin{cases}
                true, & \text{if } a = b \text{ or } (a \neq b \text{ and } \displaystyle \frac{|a - b|}{\max(|a|, |b|)} < \varepsilon) \\
                false, & \text{otherwise}
            \end{cases}$
            \item $P_{\sAgent} = \{ a \in \sAgent \mid p(a) \}$, $\bar P_{\sAgent} = \{ a \in \sAgent \mid \lnot p(a) \}$, $L_{\sAgent} = \{ a \in \sAgent \mid a \text{ receives } L \}$,
            \item $G \fFair (O) = \left[\displaystyle \frac{| P_{\sAgent} \cap L_{\sAgent} |}{ | P_{\sAgent} | } \simeq_{\varepsilon} \frac{| \bar P_{\sAgent} \cap L_{\sAgent} |}{ | \bar P_{\sAgent} | } \right]$,
            \item $I \fFair (O) = [ \forall a, a' \in \sAgent, a \neq a' \text{ and } q(a) = q(a') \Rightarrow (a \text{ receives } L \wedge a' \text{ receives } L) \ \lor \ (a \text{ does not receive } L \wedge a' \text{ does not receive } L) ]$.
        \end{itemize}

        Here, $p$ is an irrelevant protected attribute for group fairness, $q$ is an essential attribute for individual fairness, and $\cdot \simeq_{\varepsilon} \cdot$ determines whether two quantities are similar up to a tolerance $\varepsilon$.

        We thus provide two distinct fairness measures: $G \fFair (O)$ for group fairness and $I \fFair (O)$ for individual fairness.
        Group fairness requires parity in acceptance rates across demographic groups, while individual fairness requires that agents with the same relevant attribute value be treated consistently.
    \end{example}

    \subsection{Relating Envy-Freeness, Individual Fairness, and Group Fairness}

    Group fairness and individual fairness are central concepts in discussions of fairness.
    We now connect these notions to the concept of envy-freeness.
    The generalized group fairness measure is based on the similarity of average received utility between agents with and without a protected attribute.

    \begin{definition}
        \normalfont
        \name{Generalized Group Fairness}
        \label{def:group-fairness}
        Let $p$ be a predicate indicating whether an agent has a protected attribute.
        Define
        $P_{\sAgent} = \{ a \in \sAgent \mid p(a) \}$ and
        $\bar P_{\sAgent} = \{ a \in \sAgent \mid \lnot p(a) \}$,
        where $\sAgent$ is the finite set of agents and $\sResource$ the finite set of resources.
        Let $u : \sAgent \times \sResource \to \mathbb{R}$ be a utility function representing agents' preferences over resources.
        Generalized group fairness is satisfied iff the following holds:
        \begin{equation*}
            \label{eq:generalized-group-fairness}
            \frac{\displaystyle \sum_{a \in P_{\sAgent}} \sum_{b \in \sResource} u(a, b)}{|P_{\sAgent}|}
            \;\simeq_{\varepsilon}\;
            \frac{\displaystyle \sum_{a \in \bar P_{\sAgent}} \sum_{b \in \sResource} u(a, b)}{|\bar P_{\sAgent}|}.
        \end{equation*}

        Here, $\simeq_{\varepsilon}$ denotes approximate equality within tolerance $\varepsilon$.

        Note that if for all $a \in \sAgent, b \in \sResource$, it holds that
        $u(a,b) = [a \text{ receives } b]$,
        then generalized group fairness reduces to the standard group fairness definition.
    \end{definition}

    Some properties of group fairness are outlined below.

    \begin{proposition}
        \normalfont
        \name{Generalized Group Fairness Preserves Envy-Freeness}
        \label{prop:group-fairness-envy-freeness}
        Group fairness ensures envy-freeness among agents that differ only in their protected attribute.
    \end{proposition}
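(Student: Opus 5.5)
The statement is informal, so the plan is to first fix a precise reading and then give a short argument. I would formalize ``agents that differ only in their protected attribute'' as a pair $a \in P_{\sAgent}$, $a' \in \bar P_{\sAgent}$ whose other attributes coincide. Consequently, their preferences coincide as well: $u(a,b) = u(a',b)$ for every $b \in \sResource$, since $u$ represents preferences determined by non-protected features. Envy-freeness between $a$ and $a'$ then means that neither strictly prefers the other's bundle:
\[
\sum_{\langle a,b\rangle \in O} u(a,b) \geq \sum_{\langle a',b\rangle \in O} u(a,b),
\]
and symmetrically for $a'$. Because the two agents share one utility function, this reduces to the single condition that their received utilities agree, up to the tolerance $\simeq_{\varepsilon}$ in the approximate version.

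The argument then proceeds as follows. \emph{Step 1:} interpret $u(a,b)$ in Definition~\ref{def:group-fairness} as the utility that $a$ derives from $b$ under $O$, as in the reduction noted after that definition, where $u(a,b) = [a \text{ receives } b]$. The inner sum $\sum_b u(a,b)$ is then agent $a$'s received utility, and generalized group fairness states that the average received utility in $P_{\sAgent}$ is $\simeq_{\varepsilon}$ that in $\bar P_{\sAgent}$. \emph{Step 2:} restrict attention to the situation the proposition targets, where each group is a ``copy'' of the other under the matching that pairs agents differing only in $p$. In the simplest case each group contains a single representative of the relevant profile. The group averages are then the individual received utilities of $a$ and $a'$, so group fairness yields $r(a) \simeq_{\varepsilon} r(a')$. \emph{Step 3:} since $a$ and $a'$ evaluate bundles identically, $r(a) \simeq_{\varepsilon} r(a')$ means that $a$ values its own bundle as highly as $a'$'s, up to $\varepsilon$, and vice versa. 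This is ($\varepsilon$-)envy-freeness between the pair.

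The main obstacle is Step 2. Group fairness constrains only averages, so when a group is large, two matched agents can still receive very different utilities while the averages agree. I would handle this by stating the result explicitly for the per-profile version: group fairness is required within each stratum of agents sharing all non-protected attributes, so averaging happens only among agents with identical relevant features. Within a stratum, I would additionally assume that agents of the same group receive equal utility; this is the natural reading, and it can be argued from individual fairness. Under these assumptions, the stratum averages equal individual utilities and Step 3 applies directly. If the paper intends the coarse, whole-population reading, I would instead present the result with the conclusion weakened to envy-freeness ``on average''. This means the representative (average) agent of $P_{\sAgent}$ does not envy that of $\bar P_{\sAgent}$, which follows immediately from the displayed approximate equality.
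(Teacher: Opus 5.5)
Your diagnosis of Step~2 is right, and it is sharper than the paper's own argument. The paper argues by contradiction. It assumes that $a'$ receives $r'$, that $a$ does not, and that $a$ strictly prefers $r'$ to everything it holds. It then simply asserts that this ``contradicts group fairness, which ensures that $a$ should have received $r'$ or an equivalent resource.'' That is exactly the inference you identify as unavailable, because Definition~\ref{def:group-fairness} only compares two group averages.

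In fact, the proposition read literally with that definition is false. Let $P_{\sAgent} = \{a_1, a_2\}$ and $\bar P_{\sAgent} = \{a_1', a_2'\}$, where $a_i$ and $a_i'$ differ only in $p$. Take a single resource $L$ valued by everyone and the outcome $O = \{\langle a_1, L\rangle, \langle a_2', L\rangle\}$. Both group averages equal $1/2$, and under an outcome-independent reading of $u$ the condition holds for every outcome. Yet $a_1'$ envies $a_1$. So no proof can close Step~2 without extra hypotheses. Strengthening the hypothesis to per-stratum parity, with equal treatment inside each group of a stratum, is the right move. Present it, however, as a modified proposition rather than a proof of the stated one. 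Your ``on average'' fallback is only a restatement of the hypothesis and says nothing about any actual pair of agents.

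Two points still need attention. First, Step~3 establishes envy-freeness in the additive bundle sense you fixed at the start. The paper, both in its proof and in the weak envy-freeness of Example~\ref{ex:preferences}, uses an item-wise notion instead: whenever $a'$ receives some $r'$ that $a$ does not, some $r$ received by $a$ must satisfy $u(a,r') \leq u(a,r)$. Equal received utility does not yield this, even under all your extra assumptions. With shared preferences $u(\cdot,b_1) = u(\cdot,b_2) = 0.4$ and $u(\cdot,b_3) = 0.8$, let $a$ receive $\{b_1, b_2\}$ and $a'$ receive $\{b_3\}$. Both received utilities are $0.8$, so every version of group fairness holds, but $a$ envies $a'$ item-wise. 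Either declare the additive notion as the one you prove, or add an assumption such as unit demand, under which received utility is the value of the single item held. Relatedly, for $\varepsilon > 0$ you obtain only relative $\varepsilon$-envy-freeness, which is strictly weaker than the claim.

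Second, suppose you derive within-group equal treatment from individual fairness. Then $a$ and $a'$ themselves share all non-protected, hence all relevant, attributes, so individual fairness already forces $r(a) = r(a')$. Group fairness then does no work, and your result collapses into Proposition~\ref{prop:individual-fairness-envy-freeness}. To keep group fairness essential, assume equal treatment within each group of a stratum directly instead of deriving it.
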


    \begin{proof}
        Consider Definition~\ref{def:group-fairness}.
        Envy-freeness requires that every agent receives at least one resource such that, for that agent, it is no worse than any resource received by another agent.
        Preferences are given by the score $u : \sAgent \times \sResource \to [0, 1]$, where 1 denotes the most preferred and 0 the least preferred.

        Formally, if agent $a$ does not receive a resource $r'$ that another agent $a'$ has received, then there must exist some resource $r$ received by $a$ such that $u(a,r') \leq u(a,r)$.

        Now consider agents $a$ and $a'$ that differ only in their protected attribute.
        Assume envy-freeness does not hold, i.e., $a'$ receives $r'$ but $a$ does not, and $r'$ is strictly more preferred for $a$ than any resource $a$ has received.
        This contradicts group fairness, which ensures that $a$ should have received $r'$ or an equivalent resource.
        Hence, the assumption that envy-freeness does not hold leads to a contradiction.
    \end{proof}

    Envy-freeness is not only related to group fairness but also to individual fairness, as shown below.

    \begin{proposition}
        \normalfont
        \name{Individual Fairness Implies Envy-Freeness}
        \label{prop:individual-fairness-envy-freeness}
        Individual fairness ensures envy-freeness among agents that belong to the same category when classified by a relevant attribute.
    \end{proposition}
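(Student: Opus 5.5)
I will argue by contradiction, mirroring the structure of the proof of Proposition~\ref{prop:group-fairness-envy-freeness}, but with the individual fairness measure $I\fFair$ in place of group fairness. First I fix the reading of the statement. Let $q$ be the relevant attribute, and say that two agents $a, a'$ belong to the same category when $q(a) = q(a')$. Since envy-freeness is about preferences, I will assume, as in the previous proof, a preference score $u : \sAgent \times \sResource \to [0,1]$. I also assume that agents in the same category share the same preference score, i.e., $q(a) = q(a')$ implies $u(a,\cdot) = u(a',\cdot)$. This is the natural formalisation of ``similar individuals'' when $q$ is the relevant feature. Individual fairness is read, generalising $I\fFair$ from the loan example to several resources, as the requirement that any two agents with $q(a) = q(a')$ receive the same resources, up to resources of equal value to them: for every $b$ received by $a'$ there is some $b''$ received by $a$ with $u(a,b) \le u(a,b'')$, and symmetrically. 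With the single resource $L$ this is exactly the biconditional in $I\fFair$.

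The proof then proceeds in three steps. First, suppose envy-freeness fails within a category. Then there exist $a, a'$ with $q(a) = q(a')$ and a resource $r'$ such that $a'$ receives $r'$, $a$ does not receive $r'$, and $u(a,r') > u(a,r)$ for every $r$ that $a$ receives. Second, apply individual fairness to the pair $a, a'$, using $I\fFair(O) = 1$. This yields a resource $r''$ received by $a$ with $u(a,r'') \ge u(a,r')$; in the exact single-resource case, $a$ must receive $r'$ itself. Third, $r''$ is one of the resources $a$ receives, so $u(a,r') > u(a,r'') \ge u(a,r')$, which is a contradiction. Hence no agent envies another agent in the same category.

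The main obstacle is the second step: the paper's $I\fFair$ is stated only for the single-resource, Boolean-attribute case, and there it is literally a biconditional on receiving $L$. I would first prove the statement exactly in that setting. There, if $a'$ receives $L$ then $a$ receives $L$, so the envy condition ``$a$ does not receive $r'$'' can never hold, and the result is immediate. I would then state the generalisation above as the definition being used, noting explicitly that the shared-preference assumption within a category is required. Without it, two agents treated identically could still envy each other's bundles, so this assumption must be made explicit rather than left implicit.
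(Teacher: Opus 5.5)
Your argument is correct and is essentially the paper's. The paper also takes an envy witness ($a'$ receives $r'$ while $a$ does not) for two agents that agree on the relevant attributes, and observes that this contradicts the consistent treatment required by individual fairness, which is exactly your exact-case step.

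Your closing remark needs a correction. Envy in the paper's sense requires a resource that $a'$ receives and $a$ does not, so agents receiving identical sets of resources can never envy each other, and the shared-preference assumption is therefore unnecessary for the biconditional reading. That reading is also not limited to $L$: the paper's individual-fairness pipeline in the appendix applies the biconditional to every resource $r$. The assumption is likewise unused in your relaxed version, which as written is just within-category envy-freeness restated with $a$'s own score $u(a,\cdot)$.
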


    \begin{proof}
        Consider agents $a$ and $a'$ that share identical values for all relevant attributes.
        Assume envy-freeness does not hold, i.e., $a'$ receives $r'$ while $a$ does not.
        This violates individual fairness, since $a$ and $a'$ coincide in their relevant attributes and should therefore be treated consistently.
        The contradiction arises from the assumption that envy-freeness does not hold.
    \end{proof}

    \subsection{Continuous Fairness Measures}
    \label{subsec:continuous-fairness-measures}

    In Definition~\ref{def:fairness-measure}, we defined $\fFair$ as ranging over the continuous interval $\sZeroOneInterval$, but so far we have only presented discrete examples.
    Continuous measures can also be modeled.
    One such example is Jain's fairness index~\cite{Jain-1984-QuantitativeMeasureOfFairness}, a quantitative measure used to assess how evenly a resource is allocated among $n$ agents.

    \begin{definition}
        \normalfont
        \name{Jain's Fairness Index}
        \label{def:jain-index}
        Given a set of agents indexed from 1 to $n$, such that each agent receives $x_{1}, x_{2}, \ldots , x_{n}$ respectively, the index is defined (left) and rewritten in terms of fairness measures (right) as:
        \begin{equation}
            \label{eq:jain-equation}
            \begin{tabular}{cc}
                $ J(x_{1}, \ldots , x_{n}) = \displaystyle \frac{\bigl(\sum \limits _{i=1}^{n} x_{i} \bigr)^2}{n \cdot \sum \limits_{i=1}^{n} x_{i}^{2}} \ , $
                &
                $ \fFair(O) = \displaystyle \frac{\bigl(\sum \limits _{a \in \sAgent} r_{O}(a) \bigr)^2}{|\sAgent| \cdot \sum \limits_{a \in \sAgent} r_{O}(a)^{2}} \ $.
                \\
            \end{tabular}
        \end{equation}
    \end{definition}

    Jain's fairness index is widely used to measure fairness in network resource allocation, evaluate load balancing schemes in distributed systems, and balance throughput in congestion control protocols.

    \begin{example}
        \name{Continuous Fairness Measure}
        Assume that agents need to access resources that represent different values of bandwidth in a computer network.

        \begin{itemize}
            \renewcommand\labelitemi{}
            \item $\sAgent = \{A, B, C, D \}$,
            $\sResource = \{ M_{0}, M_{10}, M_{20}, M_{50} \}$,
            \item $\sAgentAttribute = \emptyset$,
            \item $\sResourceAttribute = \{ u : \sResource \to \sMeasure, $ $\ u = \{ \langle M_{0}, 0 \rangle, $ $\langle M_{10}, 10 \rangle, $ $\langle M_{20}, 20 \rangle, $ $\langle M_{50}, 50 \rangle \} \}$,
            \item $\fFair(O)$ as in (\ref{eq:jain-equation})
        \end{itemize}

        Here, $u$ represents the utility in megabits per second (Mbps) of each resource.

        Considering the following outcomes:

        \[
            O_{1} = \{ \langle A, M_{20} \rangle, \langle B, M_{20} \rangle, \langle C, M_{20} \rangle, \langle D, M_{20} \rangle \},
        \]

        \[
            O_{2} = \{ \langle A, M_{20} \rangle, \langle B, M_{20} \rangle, \langle C, M_{20} \rangle, \langle D, M_{0} \rangle \},
        \]

        \[
            O_{3} = \{ \langle A, M_{0} \rangle, \langle B, M_{0} \rangle, \langle C, M_{0} \rangle, \langle D, M_{10} \rangle \},
        \]

        we obtain $\fFair(O_{1}) = 1$, $\fFair(O_{2}) = 0.75$, and $\fFair(O_{3}) = 0.25$.
        Intuitively, $O_{1}$ is perfectly fair, $O_{2}$ moderately fair, and $O_{3}$ clearly unfair.
        In practice, however, the interpretation of these values depends strongly on context.
    \end{example}

    A continuous fairness measure more closely aligned with \emph{social} (as opposed to \emph{technical}) applications is the Gini index, which we discuss in the following example.
    \begin{example}
        \name{Complement of the Gini Index}
        \label{ex:gini-index}
        The \emph{Gini index} is a statistical measure of inequality commonly used to quantify the distribution of income or wealth within a population.
        It has also been applied in other contexts, such as decision tree algorithms in machine learning.
        In its interpretation, values closer to 0 indicate equality, while values closer to 1 indicate inequality.
        This behavior is opposite to Jain's index and to our definition of a fairness measure.
        Therefore, we define the \emph{complement of the Gini index} by inverting its output.
        The Gini index is defined (left) and rewritten in terms of fairness measures (right) as:

        \begin{equation}
            \label{def:gini-equation}
            \resizebox{\textwidth}{!}{
                \begin{tabular}{cc}
                    $G = \displaystyle\frac{\sum_{i=1}^{n} \sum_{j=1}^{n} |x_{i} - x_{j}|}{2 \cdot n \cdot \sum_{i=1}^{n} x_{i}} \ , $
                    &
                    $\fFair (O) = 1 - \displaystyle\frac{\sum_{a_{1} \in \sAgent} \sum_{a_{2} \in \sAgent} |r_{O}(a_{1}) - r_{O}(a_{2})|}{2 \cdot |\sAgent| \cdot \sum_{a \in \sAgent} r_{O}(a)}$ \\
                \end{tabular}
            }
        \end{equation}

        Here, $n$ is the number of agents (or households), and $x_{i}$ is the income (or wealth) of agent $i$.

        As in Jain's index (Definition~\ref{def:jain-index}), we substitute $\sAgent$ for $n$ and $r_{O}(a)$ for the amount received by each agent $a$.
        In this context, the agents represent households, and the resource is wealth to be distributed.

        \begin{itemize}
            \renewcommand\labelitemi{}
            \item $\sAgent = \{A, B, C, D, E, F \}$,
            $\sResource = \{ R_{5}, R_{10}, R_{15}, R_{20}, R_{50}, R_{100} \}$,
            \item $\sAgentAttribute = \emptyset$,
            \item $\sResourceAttribute = \{ u : \sResource \to \sMeasure,$ $ \ u = \{ $ $\langle R_{5}, 5 \rangle, $ $\langle R_{10}, 10 \rangle, $ $\langle R_{15}, 15 \rangle, $ $\langle R_{20}, 20 \rangle, $ $\langle R_{50}, 50 \rangle, $ $\langle R_{100}, 100 \rangle $ $\} \}$,
            \item $\fFair(O)$ as in (\ref{def:gini-equation})
        \end{itemize}

        Here, $u$ represents the utility (e.g., in thousands of euros) of each resource.

        Considering the following outcomes:

        \[
            O_{1} = \{ \langle A, R_{20} \rangle, \langle B, R_{20} \rangle, \langle C, R_{20} \rangle, \langle D, R_{20} \rangle, \langle E, R_{20} \rangle, \langle F, R_{20} \rangle \},
        \]

        \[
            O_{2} = \{ \langle A, R_{5} \rangle, \langle B, R_{10} \rangle, \langle C, R_{15} \rangle, \langle D, R_{20} \rangle, \langle E, R_{50} \rangle, \langle F, R_{100} \rangle \},
        \]

        \[
            O_{3} = \{ \langle A, R_{5} \rangle, \langle B, R_{10} \rangle \},
        \]

        we obtain $\fFair(O_{1}) = 1 - \frac{0}{1440} = 1$, $\fFair(O_{2}) = 1 - \frac{1200}{2400} = 0.5$, and $\fFair(O_{3}) = 1 - \frac{130}{180} \approx 0.28$.
        Intuitively, $O_{1}$ is perfectly fair, $O_{2}$ moderately unfair, and $O_{3}$ clearly unfair.
    \end{example}

    Other indices measure economic inequality in addition to the Gini index.
    Many of them quantify deviations from the average.
    One such measure is the Theil index.

    \begin{example}
        \name{Complement of the Theil Index}
        The \emph{Theil index}~\cite{Conceicao.Ferreira-2000-Theil} is a statistical measure of economic inequality based on the concept of entropy.
        As with the Gini index, values closer to 0 indicate equality, while values closer to 1 indicate inequality.
        For consistency with our fairness measure definition, we define the \emph{complement of the Theil index} by inverting its output.
        The Theil index is defined (left) and rewritten in terms of fairness measures (right) as:

        \begin{equation}
            \label{def:theil-equation}
            \resizebox{\textwidth}{!}{
                \begin{tabular}{cc}
                    $T_{T} = \displaystyle
                    \frac{1}{n} \cdot \sum_{i=1}^{n} \frac{x_{i}}{\overline{x}} \cdot \ln \left(\frac{x_{i}}{\overline{x}} \right) \ , $
                    &
                    $\fFair (O) = 1 - \displaystyle \frac{1}{|\sAgent|} \cdot \sum_{a \in \sAgent} \frac{r_{O}(a)}{\overline{x}} \cdot \ln \left(\frac{r_{O}(a)}{\overline{x}} \right)$ \\
                \end{tabular}
            }
        \end{equation}

        where $\overline{x}$ is the average value, defined as:

        \begin{equation}
            \label{def:theil-equation-average}
            \begin{tabular}{cc}
                $\overline{x} = \displaystyle \frac{1}{n} \cdot \sum_{i=1}^{n} x_{i} \ , $
                &
                $\overline{x} = \displaystyle \frac{1}{|\sAgent|} \cdot \sum_{a \in \sAgent} r_{O}(a)$ \\
            \end{tabular}
        \end{equation}

        Here, $n$ is the number of agents (or households) and $x_{i}$ is the income (or wealth) of agent $i$.

        \begin{itemize}
            \renewcommand\labelitemi{}
            \item $\sAgent = \{A, B, C, D, E, F \}$,
            $\sResource = \{ R_{5}, R_{10}, R_{15}, R_{20}, R_{50}, R_{100} \}$,
            \item $\sAgentAttribute = \emptyset$,
            \item $\sResourceAttribute = \{ u : \sResource \to \sMeasure, $ $ u = \{ \langle R_{5}, 5 \rangle,$ $ \langle R_{10}, 10 \rangle,$ $ \langle R_{15}, 15 \rangle,$ $ \langle R_{20}, 20 \rangle,$ $ \langle R_{50}, 50 \rangle,$ $ \langle R_{100}, 100 \rangle \} \}$,
            \item $\fFair(O)$ as in (\ref{def:theil-equation})
        \end{itemize}

        For comparison, consider the first two outcomes from Example~\ref{ex:gini-index}:
        \begin{enumerate}
            \item $\overline{x} = 120/6 = 20$, yielding $\fFair(O_{1}) = 1 - 0 = 1$,
            \item $\overline{x} = 200/6 \simeq 33.33$, yielding $\fFair(O_{2}) = 1 - 0.432 \approx 0.568$.
        \end{enumerate}

        While the first outcome is measured as perfectly fair, consistent with the Gini index, the second outcome is assessed as somewhat fairer than under the Gini index.
        This illustrates how different continuous fairness measures can yield distinct interpretations of the same distribution.
    \end{example}

    Continuous inequality measures can be unified under a generalized formulation known as the generalized inequality index (GII)~\cite{Tsekouras_Tsallis-2005-GeneralizedEntropy,Dagum-1997-GeneralizedEntropyInequalityMeasures,MussardSeyteTerraza-2003-GeneralizedEntropy}.

    \begin{definition}
        \normalfont
        \name{Generalized Inequality Index}
        \label{def:gen-ineq-index}
        A \emph{generalized inequality index} (GII) is a function ranging over $[0,1]$.
        On the left we present the definition of a GII, and on the right its reformulation in terms of fairness measures:
        \begin{equation}
            \label{def:general-inequality}
            \begin{tabular}{cc}
                $ \mathrm{GII}= \displaystyle \sum_{i} p_{i} \cdot f\!\left( \frac{x_{i}}{\overline{x}} \right)$ \ , &
                $\fFair (O) = 1 - \displaystyle \sum_{a \in \sAgent} w(a) \cdot f\!\left(\frac{r_{O}(a)}{\overline{x}}\right)$ \\
            \end{tabular}
        \end{equation}

        Here, $x_{i}$ denotes the amount received by agent $i$, $\overline{x}$ is the mean across agents as defined in (\ref{def:theil-equation-average}), $p_{i}$ denotes the weight associated with agent $i$, $w(a)$ denotes the weight for agent $a$, and $f(x)$ is chosen according to the specific inequality measure.
        For example, in the Theil index defined in (\ref{def:theil-equation}), $p_{i} = w(a) = \tfrac{1}{n}$ and $f(x) = x \cdot \ln(x)$.

        Since the GII measures inequality, a value of 0 corresponds to perfect equality, while positive values increase with inequality.
        In the \AR framework, we invert these values: perfect equality is mapped to 1, and lower non-negative values indicate decreasing fairness.
    \end{definition}

    Continuous fairness measures can also be combined to fine-tune evaluations and capture multiple contexts of fairness simultaneously.
    The \emph{contextual fairness}~\cite{Kerkhoven-2025-Assessing} measure is a fairness measure composed by other measures.

    \begin{definition}
        \normalfont
        \name{Contextual Fairness}
        \label{def:contextual-fairness}
        Given a set of fairness measures $\fFair_{i}$ with respective weights $w_{i}$, $1 \leq i \leq n$, such that
        \[
            \sum_{i=1}^{n} w_{i} = 1.
        \]

        We denote the \emph{contextual fairness measure} $\fFair_{\mathrm{co}}$ as:
        \[
            \fFair_{\mathrm{co}}(O) = \sum_{i=1}^{n} w_{i}\cdot \fFair_{i}(O).
        \]
    \end{definition}

    Fairness measures as defined in Definition~\ref{def:contextual-fairness} can be used to detect bias in a dataset.
    In Example~\ref{ex:compas}, we show a salient case where bias was found in a computer system used for justice in the United States.

    \begin{example}
        \name{Fairness Measure for Equalized Odds}
        \label{ex:compas}
        The COMPAS (Correctional Offender Management Profiling for Alternative Sanctions) system~\cite{ProPublica-2016} is a well-known risk assessment tool used in the U.S. criminal justice system to predict the likelihood of recidivism (re-offending after a prior arrest).
        Studies, most notably by ProPublica in 2016\footnote{\url{https://github.com/propublica/compas-analysis}}, found that COMPAS exhibited racial bias: it was more likely to falsely flag Black defendants as future criminals (higher false positive rate) and more likely to misclassify White defendants as low risk (higher false negative rate).

        Based on this example, we design a fairness measure to detect bias in a prediction system.
        The system outputs two possible scores (resources), analogous to COMPAS:
        \begin{itemize}
            \item $R_{\mathrm{low}}$: Low Risk (scores 1--4)
            \item $R_{\mathrm{high}}$: Medium Risk (scores 5--7) and High Risk (scores 8--10)
        \end{itemize}

        The prediction system is defined as follows:
        \begin{itemize}
            \renewcommand\labelitemi{}
            \item $\sAgent = \{A, B, C, D, E, F \}$,
            $\sResource = \{ R_{\mathrm{low}}, R_{\mathrm{high}} \}$,
            \item $\sAgentAttribute = \{ p : \sAgent \to \sBoolean, $ $ p = \{ \langle A , false \rangle,$ $ \langle B , false \rangle,$ $ \langle C , false \rangle,$ $ \langle D , true \rangle,$ $ \langle E , true \rangle,$ $ \langle F , true \rangle \},$
            \item $res : \sAgent \to \sResource, $ $ res = \{ \langle A , R_{\mathrm{low}} \rangle,$ $ \langle B , R_{\mathrm{high}} \rangle,$ $ \langle C , R_{\mathrm{high}} \rangle,$ $ \langle D , R_{\mathrm{low}} \rangle,$ $ \langle E , R_{\mathrm{low}} \rangle,$ $ \langle F , R_{\mathrm{high}} \rangle \},$
            \item $\sResourceAttribute = \{ u : \sResource \to \sMeasure, \ u = \{ \langle R_{\mathrm{low}}, 0 \rangle,$ $ \langle R_{\mathrm{high}}, 1 \rangle \} \}$,
            \item $\overline{(x_{i})_{i=1}^{n}}$ denotes the arithmetic mean of the sequence $(x_{1}, x_{2}, \ldots , x_{n})$,
            \item $corr : \bigcup_{n=1}^{\infty} (\sMeasure \times \sMeasure) \to [-1,1]$, defined as \\
            $corr((x_{i})_{i=1}^{n}, (y_{i})_{i=1}^{n}) =
            \frac{\sum_{k=1}^{n} (x_{k} - \overline{(x_{i})_{i=1}^{n}})(y_{k} - \overline{(y_{i})_{i=1}^{n}})}
            {\sqrt{\sum_{k=1}^{n} (x_{k} - \overline{(x_{i})_{i=1}^{n}})^2} \cdot
            \sqrt{\sum_{k=1}^{n} (y_{k} - \overline{(y_{i})_{i=1}^{n}})^2}}$

            \item $\fFair(O) = | corr(([\langle x_{i}, R_{\mathrm{high}} \rangle \in O \land res(x_{i}) \neq R_{\mathrm{high}}])_{i=1}^{n}, ([p(x_{i})])_{i=1}^{n}) |$
        \end{itemize}

        Here, $p$ is the protected attribute (e.g., race in COMPAS), $res$ is the ground truth based on facts, and $u$ is the associated risk of each score.
        We use the Pearson correlation coefficient, though other correlation coefficients could also be applied.
        Recall that the square brackets $[\,]$ follow Iverson notation, and the fairness measure is constrained to return values in $[0,1]$.
    \end{example}

    \section{Operationalization}
    \label{sec:operationalization}

    In this section, we explain how the concepts presented above are operationalized, that is, how abstract ideas are translated into concrete, formal representations that can be implemented and measured.
    An instance of the \AR metamodel provides a formal representation of a fairness scenario
    $\tFairScen = \langle \sAgent, \sResource, \sAgentAttribute, \sResourceAttribute \rangle$.
    Assuming that the agent and resource attributes are given, the fairness measure still needs to be defined.
    As illustrated in the previous examples, defining a fairness measure can be complex and error-prone.

    To address this challenge, we employ the \Tiles framework~\cite{Mendez.Kampik.Aler.Dignum-2024-SCAI} as a means of specifying fairness measures for a given fairness scenario.
    The purpose of this framework is to improve both the readability and the reliability of fairness measures.
    To achieve this, we decompose the fairness measure into smaller, composable building blocks.

    \subsection{Design of the Blocks}
    \label{subsec:design}

    The design of blocks in \Tiles is inspired by the concept of a \emph{module} in software engineering.

    \begin{concept}
        \normalfont
        \name{Module}
        A \emph{module} is a software component characterized by the following properties:
        \begin{enumerate}
            \item (\emph{encapsulation}) hides internal details, exposing only necessary interfaces;
            \item (\emph{reusability}) can be applied across different systems or projects to reduce redundancy;
            \item (\emph{interchangeability}) can be replaced by other modules with equivalent functionality;
            \item (\emph{cohesion}) is focused on a single, well-defined purpose;
            \item (\emph{low coupling}) minimizes dependencies on unrelated components;
            \item (\emph{scalability}) adapts to future changes with minimal impact;
            \item (\emph{testability}) is designed for independent verification of functionality.
        \end{enumerate}
    \end{concept}

    Recall that a fairness scenario is a tuple $\tFairScen = \langle \sAgent, \sResource, \sAgentAttribute, \sResourceAttribute \rangle$.
    We assume that $\sResourceAttribute$ contains a function $u : \sResource \to \sMeasure$ \emph{(utility)}, $\sAgentAttribute$ contains a function $q : \sAgent \to \sMeasure$ \emph{(needs)}, and we define the auxiliary function $r_{O} : \sAgent \to \sMeasure$ \emph{(accumulates)} as in Definition~\ref{def:equality-equity}, which depends on $u$.

    The functional notation in \Tiles can be represented using a graphical notation, making the pipeline structure visually explicit.

    \subsection{Graphical Notation}
    \label{subsec:notation}

    One of the key aspects of \Tiles is its ability to represent configurations through a graphical notation that clarifies how a configuration operates.
    A configuration is a formal representation of how blocks are interconnected.
    Each block, called a \emph{tile}, can connect with others to define a specific construct.

    \begin{concept}
        \normalfont
        \name{Tile}
        \label{def:tile}
        A \emph{tile} is a construct that contains a name, a function, an input type, an output type, and contextual information.
        The contextual information includes the fairness scenario, constants, and auxiliary functions.
        A tile is represented as
        \[
            \tiles{_{\alpha} \ name \ _{\beta}}
        \]
        where $\alpha$ and $\beta$ are type annotations for the input and output types, respectively, and $\tilesfun{name}$ denotes the function name.
        The input type is omitted if the tile represents a constant. \\

        \noindent A type in \Tiles can be:
        \begin{itemize}
            \item atomic;
            \item a tuple composed of other types: $\langle \alpha_{1}, \ldots, \alpha_{n} \rangle$; or
            \item a sequence of a type: $(\alpha)$.
        \end{itemize}
        The atomic types for fairness scenarios are: \tilestype{a} (agent), \tilestype{r} (resource), \tilestype{m} (quantity or measure), and \tilestype{b} (Boolean).
    \end{concept}

    Type annotations in \Tiles can be used not only to specify the connection between the output of one tile and the input of another, but also to denote input variable names when the tile has parameters.
    Parametric tiles are particularly useful for defining customized tiles.

    The \Tiles framework provides primitive tiles, as shown in Table~\ref{tab:primitive-tiles}. When types coincide or can be unified, the tiles can be snapped together. In that case, the most specific type that unifies in denoted in the connection.

    If some parameters are required by an internal tile and remain constant by the previous ones, they can be denoted altogether following a semicolon (;) after the necessary parameters.

    \begin{longtable}{ll}
        \caption{Definitions and examples of primitive tiles} \\
        \hline
        \textbf{Function} & \textbf{Description}                                                                      \\
        \hline
        &                                                                                           \\
        $\text{cross}((a_{i})_{i=1}^{n},(b_{j})_{j=1}^{m})$
        & \makecell[l]{$(\langle a_{1}, b_{1}\rangle,$ $ \langle a_{1}, b_{2}\rangle,$ $ \ldots , $ \\ $ \langle a_{1}, b_{m} \rangle,$ $ \langle a_{2}, b_{1} \rangle,$ $ \ldots,$ \\ $ \langle a_{n}, b_{m}\rangle)$}                  \\
        $\text{cross}((A, B, C),(1, 2, 3))$                  &
        \makecell[l]{$(\langle A, 1 \rangle ,$ $ \langle A, 2 \rangle ,$  $ \langle A, 3 \rangle ,$ \\ $ \langle B , 1 \rangle,$ $ \langle B, 2 \rangle ,$ $ \langle B, 3 \rangle ,$ \\ $ \langle C , 1 \rangle,$ $ \langle C , 2 \rangle, $ $ \langle C , 3 \rangle)$}                  \\
        \\
        $\text{zip}((a_{i})_{i=1}^{n},(b_{j})_{j=1}^{m})$
        & $(\langle a_{i},b_{i}\rangle)_{i=1}^{\min \{n,m\}}$                                  \\
        $\text{zip}((A, B, C, D), (1,2,3))$
        & $(\langle A, 1\rangle, \langle B, 2\rangle , \langle C, 3 \rangle)$                                  \\
        \\
        $\text{apply}_{\varphi}(a)$
        & $\varphi(a)$                                                                                                                                                                       \\
        $\text{apply}_{x \mapsto x > 0}(0)$
        & $false$                                                                                                                                                                       \\
        & \\
        $\text{bind}_{\varphi}((a_{i})_{i=1}^{n})$
        & $\varphi(a_{1}) \ || \ \varphi(a_{2}) \ || \ \ldots \ || \ \varphi(a_{n})$                                                                                                                                                                       \\
        $\text{bind}_{x \mapsto (x, x+1)}((0, 1, 7))$
        & $(0, 1, 1, 2, 7, 8)$                                                                                                                                                                       \\
        & \\
        $\text{fold}_{\varphi}(z,(a_{i})_{i=1}^{n})$
        & $\varphi(\ldots\varphi(\varphi(z,a_{1}),a_{2})\ldots,a_{n})$                                                                                                                                                                       \\
        $\text{fold}_{\langle z, x \rangle \mapsto z + x}(0, (1,2,4,8,16))$
        & $31 $                                                                                                                                                                       \\
        & \\
        & \\
        where $\cdot || \cdot$ is the \\
        concatenation of sequences. \\
        & \\
        \hline
        \label{tab:primitive-tiles}
    \end{longtable}

    Basic auxiliary tiles are also provided, as defined in Table~\ref{tab:basic-auxiliary-tiles}.

    \begin{longtable}{ll}
        \caption{Definitions and examples of basic auxiliary tiles} \\
        \hline
        \textbf{Function} & \textbf{Description}                                                         \\
        \hline
        &                                                                              \\
        $\text{filter}_{\varphi}((a_{i})_{i=1}^{n})$
        & \makecell[l]{$\text{bind}_{f_\varphi} ((a_{i})_{i=1}^{n})$, \\
            $f_\varphi(x) = \begin{cases}
                                (x), & \text{ if } \varphi(x) \\ () , & \text { otherwise }
            \end{cases}$
        } \\
        $\text{filter}_{x \mapsto \textrm{even}(x)}((1, 4, 1, 4, 2))$
        & \makecell[l]{$(4, 4, 2)$}                                                    \\
        \\
        $\text{map}_{\varphi}((a_{i})_{i=1}^{n})$
        & $\text{bind}_{f_\varphi} ((a_{i})_{i=1}^{n})$, $f_\varphi(x) = (\varphi(x))$ \\
        $\text{map}_{x \mapsto 2 x}((0, 1, 1, 2, 3))$
        & $(0, 2, 2, 4, 6)$                  \\
        & \\
        $\text{reverse}((a_{i})_{i=1}^{n})$
        & $\text{fold}_{f}((a_{i})_{i=1}^{n})$, $f(z, x) = (x) \ || \ z$                                                                                  \\
        $\text{reverse}((0, 1, 1, 2, 3))$
        & $(3, 2, 1, 1, 0)$                  \\
        & \\
        $\text{distinct}((a_{i})_{i=1}^{n})$
        & \makecell[l]{$\text{fold}_{f}((a_{i})_{i=1}^{n})$, \\
            $f(z, x) = \begin{cases}
                           z, & \text{ if } x \in z \\ z \ || \ (x) , & \text { otherwise }
            \end{cases}$
        } \\
        $\text{distinct}((A, B, B, A, C))$
        & $(A, B, C)$                  \\
        & \\
        \hline
        \label{tab:basic-auxiliary-tiles}
    \end{longtable}

    \subsection{Examples}

    Let us now consider some examples of how \Tiles operates.
    Figure~\ref{fig:pipeline-equality-primitive} illustrates a pipeline for equality.
    Define $\varphi_{c} : \sMeasure \times \sMeasure \to \sMeasure$ by $\varphi_{c}(z,x) = z + 1$.

    \begin{figure*}[ht!]
        \centering
        \resizebox{\textwidth}{!}{
            \begin{tikzpicture}[x=1mm, y=1mm, box/.style={rectangle, draw, rounded corners=2mm, minimum width=4mm, minimum height=8mm, align=center}]
            \node [box] (S) {};
            \node [box, right=4mm of S] (allagent) {$\mathsf{\ all\text{-}agent \ \Big| \ _{(a)} \ map \ \mathnormal{r_{O}} \ \Big| \ _{(m)} \ distinct \ \Big| \ _{(m)} \ fold \ 0 \ using \ \mathnormal{\varphi_{c}} \ \Big| \ _{m} \ apply \ m=1 \ _{b}} $};
            \draw [->] (S.east) -- (allagent.west);
            \end{tikzpicture}
        }

        \caption{Pipeline for equality.}
        \label{fig:pipeline-equality-primitive}
    \end{figure*}

    To improve readability, we can introduce the following derived tiles:

    \begin{equation*}
        \tiles{_{(a)} \ accumulates \ _{(m)}} := \tiles{_{(a)} \ map \ \mathnormal{r_{O}} \ _{(m)}} \ ,
    \end{equation*}

    \begin{equation*}
        \tiles{_{(\alpha)} \ length \ _{m}} := \tiles{_{(\alpha)} \ fold \ 0 \ using \ \mathnormal{\varphi_{c}} \ _{m}} \ ,
    \end{equation*}

    and the following composite tile:

    \begin{equation*}
        \tiles{_{(m)} \ all\text{-}equal \ _{b}} := \tiles{\ _{(m)} \ distinct \ \Big| \ _{(m)} \ length \ \Big| \ _{m} \ apply \ m=1 \ _{b}} \ .
    \end{equation*}

    This yields Figure~\ref{fig:pipeline-equality-short}, which expresses the same pipeline in a more concise and readable form.

    \begin{figure*}[ht!]
        \centering
        \begin{tikzpicture}[x=1mm, y=1mm, box/.style={rectangle, draw, rounded corners=2mm, minimum width=4mm, minimum height=8mm, align=center}]
            \node [box] (S) {};
            \node [box, right=4mm of S] (allagent) {$\mathsf{\ all\text{-}agent \ \Big| \ _{(a)} \ accumulates \ \Big| \ _{(m)} \ all\text{-}equal \ _{b}} $};
            \draw [->] (S.east) -- (allagent.west);
        \end{tikzpicture}
        \caption{Pipeline for equality using derived and composite tiles.}
        \label{fig:pipeline-equality-short}
    \end{figure*}

    Table~\ref{tab:tiles-customized} presents a summary of customized tiles used in the examples.

    \begin{longtable}{ll}
        \caption{Customized tiles based on primitive and basic auxiliary tiles} \\
        \hline
        \textbf{Tile Definition} \\
        \hline
        \\
        \tiles{_{(a)} \ accumulates \ _{(m)}}                                         & :=                            \tiles{_{(a)} \ map \ \mathnormal{r_{O}}(a) \ _{(m)}}          \ \ \ \                                                                                                                                        \\
        \\
        \tiles{_{(a)} \ needs \ _{(m)}}   & :=                                \tiles{_{(a)} \ map \ \mathnormal{q}(a) \ _{(m)}}                                                                        \\
        \\
        \tiles{_{(\alpha )} \ sum \ \varphi (\mathnormal{x_{\alpha}}) \ _{m}}                                    & :=  \tiles{_{(\alpha)} \ fold \ 0 \ using \ \langle \mathnormal{z}, \mathnormal{x_{\alpha}} \rangle \mapsto \mathnormal{z} + \varphi (\mathnormal{x_{\alpha}}) \ _{m}}                   \\
        & \qquad where $\varphi : \alpha \to \sMeasure$, and $x_{\alpha} \in \alpha$                                                                                                                          \\
        \\
        \tiles{_{(m)} \ sum \ _{m_{0}}}   & :=                                   \tiles{_{(m)} \ sum \ m \ _{m_{0}}}                   \\
        \\
        \tiles{_{(\alpha)} \ length \ _{m}} & :=                              \tiles{_{(\alpha)} \ sum \ 1 \ _{m}} \\
        \\
        \tiles{_{(m)} \ all\text{-}equal \ _{(b)}}  & :=                    \tiles{_{(m)} \ distinct \ \Big| \ _{(m)} \ length \ \Big| \ _{m} \ apply \ m \leq 1 \ _{b}}                                 \\
        \\
        \tiles{_{(\alpha)} \ exists \ \varphi (\mathnormal{x_{\alpha}}) \ _{b}} & := \tiles{_{(\alpha)} \ fold \ \mathnormal{false} \ using \ \langle \mathnormal{z}, \mathnormal{x_{\alpha}} \rangle \mapsto \mathnormal{z} \lor \varphi (\mathnormal{x_{\alpha}}) \ _{b}}                  \\
        & \qquad where $\varphi : \alpha \to \sBoolean$, and $x_{\alpha} \in \alpha$                                                                                                                          \\
        \\
        \tiles{_{(\alpha)} \ forall \ \varphi (\mathnormal{x_{\alpha}}) \ _{b}} & := \tiles{_{(\alpha)} \ fold \ \mathnormal{true} \ using \ \langle \mathnormal{z}, \mathnormal{x_{\alpha}} \rangle \mapsto \mathnormal{z} \land \varphi (\mathnormal{x_{\alpha}}) \ _{b}}                 \\
        & \qquad where $\varphi : \alpha \to \sBoolean$, and $x_{\alpha} \in \alpha$                                                                                                                          \\
        \\
        \tiles{_{(m_{0}),(m_{1})} \ all\text{-}at\text{-}least \ _{b}} & := \tiles{_{(m_{0}), (m_{1})} \ zip \ \Big| \ _{(\langle m_{0}, m_{1}\rangle )} \ forall \ (m_{0} \geq m_{1}) \ _{b}}        \\
        \\
        \hline
        \label{tab:tiles-customized}

    \end{longtable}

    It is possible to design components using \Tiles. In that case, a parameter can be given to the component. Figure~\ref{fig:def-rt} contains the definition of a component to compute $r_{O}(\mathsf{a})$ for an agent $\mathsf{a}$ by only using the primitive property
    $\mathsf{a \text{ receives } r}$, for agent $\mathsf{a}$ and a resource $\mathsf{r}$. Notice that, although \tilesfun{all\text{-}resource} does not need $\mathsf{a}$, we use the semicolon notation to denote that the tile passes the parameter on to \tilesfun{filter}, which needs it.

    \begin{figure*}[ht!]
        \centering
        \resizebox{\textwidth}{!}{
            \begin{tikzpicture}[x=1mm, y=1mm, box/.style={rectangle, draw, rounded corners=2mm, minimum width=4mm, minimum height=8mm, align=center}]
            \node [box] (S) {$\ \mathsf{_{a}} \ $};
            \node [box, right=6mm of S] (allagent) {$\mathsf{ \ _{;a} \  all\text{-}resource \ \Big| \ _{(r);a} \ filter \ a \text{ receives } r \ \Big| \ _{(r)} \ map \ \mathnormal{u}(r) \ \Big| \ _{(m_{0})} \ sum \ _{m}} $};
            \draw [->] (S.east) -- (allagent.west);
            \end{tikzpicture}
        }

        \caption{Component that computes $r_{O}(\mathsf{a})$, where the agent $\mathsf{a}$ is a parameter for the component.}
        \label{fig:def-rt}
    \end{figure*}

    We represent the agents $\sAgent$ as a sorted sequence of identifiers $\fOrderedSeq{A}$, denoted by \tilesfun{all-agent}.
    We define \tilesfun{accumulates} and \tilesfun{needs} by applying $r_{O}$ and $q$, respectively, to each element of the sequence.
    Additionally, we define \tilesfun{all-equal} as the result of checking whether all elements in the sequence are equal, and \tilesfun{all-at-least} as the result of checking, for each pair in the sequence, whether its first component is greater than or equal to the second.

    Equality can then be expressed through the following pipeline:
    \begin{equation}
        \fFair_{eqa} (O) = \tilesfun{all-equal}(\tilesfun{accumulates}(\tilesfun{all-agent}))
        \label{eq:equality-pipeline}
    \end{equation}

    Thus, $\fFair_{eqa}$ in Definition~\ref{def:equality-equity} is equivalent to $\fFair_{eqa}$ given in (\ref{eq:equality-pipeline}).

    \begin{proposition}
        \normalfont
        \name{Correctness of the Equality Pipeline}
        Let $\tFairScen = \langle \sAgent, \sResource, \sAgentAttribute, \sResourceAttribute \rangle$ be a fairness scenario, $\fFair_{eqa}$ defined as in Definition~\ref{def:equality-equity}, and $\fFair_{eqa'}$ defined as in (\ref{eq:equality-pipeline}).
        Then, for every outcome $O$,

        \[
            \fFair_{eqa} (O) = 1 \Longleftrightarrow \fFair_{eqa'} (O) = 1 \ .
        \]

    \end{proposition}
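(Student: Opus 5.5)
We unfold the pipeline in (\ref{eq:equality-pipeline}) tile by tile and reduce both sides to the same statement about the set of accumulated values $V = \{ r_{O}(a) \mid a \in \sAgent \}$.

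\textbf{Evaluating the pipeline.} Let $\fOrderedSeq{A} = (a_{1}, \ldots, a_{n})$ be the sorted sequence produced by \tilesfun{all-agent}. It lists every element of $\sAgent$ exactly once, and $n = |\sAgent| \geq 1$ because $\sAgent$ is non-empty. By the definition of \tilesfun{map} via \tilesfun{bind}, the tile \tilesfun{accumulates} yields $(r_{O}(a_{1}), \ldots, r_{O}(a_{n}))$. Inside \tilesfun{all-equal}, the tile \tilesfun{distinct} returns a sequence $d$; \tilesfun{length} then computes the length of $d$ by folding $\varphi_{c}$ from $0$; finally the tile tests $\mathrm{length}(d) \leq 1$ (or $=1$ in Figure~\ref{fig:pipeline-equality-short}).

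\textbf{Key lemma.} We need the following fact about \tilesfun{distinct}: for any sequence $s$, its output $d$ contains each element of $s$ exactly once, and contains nothing else. We prove it by induction on the length of the input, following the fold with $f(z,x)$. The invariant is that after processing a prefix of $s$, the accumulator $z$ has no repetitions and its elements are exactly those of the prefix. In the inductive step, $x$ is appended precisely when $x \notin z$, which preserves both parts of the invariant. A second small induction shows that the length fold counts the elements of $d$, so $\mathrm{length}(d) = |V|$.

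\textbf{Both directions.} ($\Rightarrow$) If $\fFair_{eqa}(O)=1$, then all $r_{O}(a)$ coincide. Since $\sAgent$ is non-empty, $|V| = 1$, so the test succeeds under either reading ($\leq 1$ or $=1$). ($\Leftarrow$) If the test succeeds, then $|V| \leq 1$. Hence for any $a, a' \in \sAgent$, both $r_{O}(a)$ and $r_{O}(a')$ lie in a set with at most one element, so they are equal, and $\fFair_{eqa}(O)=1$.

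\textbf{Main obstacle.} The only non-routine step is the invariant for \tilesfun{distinct}. The table writes its fold without an explicit initial value, so we read the initial value as $()$, the analogue of the fold for \tilesfun{reverse}. Non-emptiness of $\sAgent$ is needed so that the two versions of \tilesfun{all-equal} in the paper ($m=1$ versus $m \leq 1$) agree.
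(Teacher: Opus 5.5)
Your argument is correct, and it shares the paper's overall shape: both proofs reduce the pipeline to a statement about the value set $V = \{ r_{O}(a) \mid a \in \sAgent \}$. The difference is that yours works one level lower.

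The paper takes \tilesfun{all-equal} at its informal meaning (``all elements in the sequence are equal''). Expanding $\fFair_{eqa'}(O)$ therefore gives $\forall m, m' \in (r_{O}(a))_{a \in \fOrderedSeq{\sAgent}}\ (m = m')$ immediately. Its only substantive step is that membership in this sequence coincides with membership in $V$, after which pairwise equality over $V$ is exactly the definition of $\fFair_{eqa}$.

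You instead check the composite tile against its definition through \tilesfun{distinct}, \tilesfun{length}, and the final test. So your key lemma is the fold invariant for \tilesfun{distinct}, and your criterion is the cardinality condition $|V| \leq 1$ rather than pairwise equality. This gives correctness for the pipeline as actually built from the auxiliary and customized tiles, not just for the stated specification of \tilesfun{all-equal}.

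It also uses the non-emptiness hypothesis exactly where it matters. It reconciles the test $m = 1$ in the composite definition behind Figure~\ref{fig:pipeline-equality-short} with the test $m \leq 1$ in Table~\ref{tab:tiles-customized}: with $m = 1$, an empty agent set would make the pipeline return $0$ while $\fFair_{eqa}$ is vacuously $1$. The paper instead invokes non-emptiness for $a \in \sAgent \Leftrightarrow a \in \fOrderedSeq{\sAgent}$, which holds for any finite $\sAgent$.

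The paper's route is shorter because it leaves the correctness of \tilesfun{all-equal} implicit. Your reading of the unspecified initial value of the \tilesfun{distinct} fold as $()$ is the natural one.
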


    \begin{proof}
        Expanding $\fFair_{eqa'} (O)$ in (\ref{eq:equality-pipeline}) yields:
        \begin{equation*}
            [ \forall m, m' \in (r_{O}(a))_{a \in \fOrderedSeq{\sAgent}} \ (m = m') ] \ .
        \end{equation*}

        Since $\sAgent$ is non-empty and finite, $a \in \sAgent$ iff $a \in \fOrderedSeq{\sAgent}$.
        In particular, the sequence $(r_{O}(a))_{a \in \fOrderedSeq{\sAgent}}$ contains, for each agent $a$, the value $r_{O}(a)$.
        Thus, for each value $m$, $m \in (r_{O}(a))_{a \in \fOrderedSeq{\sAgent}}$ iff $m \in \{ r_{O}(a) \mid a \in \sAgent \}$.

        Therefore, $\fFair_{eqa'} (O)$ holds iff

        \[
            \forall m, m' \in \{ r_{O}(a) \mid a \in \sAgent \} \ (m = m') \ .
        \]

        This is equivalent to
        \begin{equation}
            \forall a, a' \in \sAgent \ (r_{O}(a) = r_{O}(a')) \ ,
            \label{eq:equality-of-values}
        \end{equation}
        which is precisely the definition of $\fFair_{eqa} (O)$.
    \end{proof}

    We define equity with the following pipeline:
    \begin{equation}
        \fFair_{eqi} (O) = \tilesfun{all-at-least}(\tilesfun{accumulates}(\tilesfun{all-agent}), \tilesfun{needs}(\tilesfun{all-agent}))
        \label{eq:equity-pipeline}
    \end{equation}

    Observe that $\fFair_{eqi}$ in Definition~\ref{def:equality-equity} is equivalent to $\fFair_{eqi}$ given in (\ref{eq:equity-pipeline}).

    \begin{proposition}
        \normalfont
        \name{Correctness of the Equity Pipeline}
        Let $\tFairScen = \langle \sAgent, \sResource, \sAgentAttribute, \sResourceAttribute \rangle$ be a fairness scenario, $\fFair_{eqi}$ defined as in Definition~\ref{def:equality-equity}, and $\fFair_{eqi'}$ defined as in (\ref{eq:equity-pipeline}).
        Then, for every outcome $O$,
        \[
            \fFair_{eqi} (O) = 1 \Longleftrightarrow \fFair_{eqi'} (O) = 1 \ .
        \]

    \end{proposition}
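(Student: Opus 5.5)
We unfold the pipeline in (\ref{eq:equity-pipeline}) tile by tile, following the pattern of the proof of correctness of the equality pipeline. Write $\fOrderedSeq{\sAgent} = (a_{1}, \ldots, a_{n})$ for the sorted sequence produced by \tilesfun{all-agent}. Here $n = |\sAgent| \geq 1$, since $\sAgent$ is non-empty and finite, and every agent occurs exactly once. By the definitions in Table~\ref{tab:tiles-customized}, \tilesfun{accumulates} is $\tilesfun{map}$ with $r_{O}$. By the definition of map as a bind in Table~\ref{tab:basic-auxiliary-tiles}, it yields $(r_{O}(a_{1}), \ldots, r_{O}(a_{n}))$. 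Likewise, \tilesfun{needs} applied to the same sequence yields $(q(a_{1}), \ldots, q(a_{n}))$.

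Both sequences have length $n$, so applying \tilesfun{zip} (Table~\ref{tab:primitive-tiles}) loses no elements. It produces the sequence of pairs $(\langle r_{O}(a_{i}), q(a_{i}) \rangle)_{i=1}^{n}$. This is the step I expect to need the most care, mild as it is. It relies on both branches being fed the \emph{same} ordered sequence $\fOrderedSeq{\sAgent}$, so that index $i$ pairs the received amount and the need of the same agent $a_{i}$. It also relies on the lengths agreeing, so that $\min\{n,n\} = n$ and no agent is dropped.

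Next, \tilesfun{forall} with predicate $m_{0} \geq m_{1}$ is a fold starting at $true$ with conjunction. A short induction on the length of the sequence shows that $\text{fold}_{\langle z, x\rangle \mapsto z \land \varphi(x)}(true, (x_{i})_{i=1}^{k}) = \bigwedge_{i=1}^{k} \varphi(x_{i})$. Hence $\fFair_{eqi'}(O) = 1$ iff $r_{O}(a_{i}) \geq q(a_{i})$ for all $1 \leq i \leq n$.

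Finally, since $a \in \sAgent$ iff $a$ occurs in $\fOrderedSeq{\sAgent}$, this condition is equivalent to $\forall a \in \sAgent \ (r_{O}(a) \geq q(a))$. That is exactly $\fFair_{eqi}(O) = 1$ by Definition~\ref{def:equality-equity}, which gives both directions of the equivalence at once.
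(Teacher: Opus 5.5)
Your proposal is correct and follows essentially the same route as the paper. Both arguments unfold the pipeline into an index-wise comparison of $(r_{O}(a))_{a \in \fOrderedSeq{\sAgent}}$ with $(q(a))_{a \in \fOrderedSeq{\sAgent}}$, and then use that each agent of $\sAgent$ occurs in $\fOrderedSeq{\sAgent}$. You are more explicit than the paper on three points it leaves implicit: that zip does not truncate, that both branches are aligned on the same ordered sequence, and that the conjunctive fold computes $\bigwedge_{i}$.
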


    \begin{proof}
        For the case of equity, let
        \[
            X = (r_{O}(a))_{a \in \fOrderedSeq{\sAgent}}, \quad
            Y = (q(a))_{a \in \fOrderedSeq{\sAgent}}.
        \]

        By construction, $X$ and $Y$ contain as many elements as $\sAgent$, and $|X| = |Y|$.
        Expanding $\fFair_{eqi'} (O)$ gives:
        \[
            [ \forall i \in \sNat, \ 1 \leq i \leq |X| \ \Rightarrow \
            \text{$i$-th elem. } (r_{O}(a))_{a \in \fOrderedSeq{\sAgent}} \geq
            \text{$i$-th elem. } (q(a))_{a \in \fOrderedSeq{\sAgent}} ] \ .
        \]

        This condition holds iff
        \[
            \forall a \in \sAgent \ (r_{O}(a) \geq q(a)),
        \]
        which is precisely the definition of $\fFair_{eqi} (O)$ in Definition~\ref{def:equality-equity}.
    \end{proof}

    \subsection{Implementation}

    The \Tiles framework is implemented as an open-source project\footnote{\url{https://github.com/julianmendez/tiles}} written in the \Soda language~\cite{Mendez-2023-Soda}, an open-source functional language\footnote{\url{https://github.com/julianmendez/soda}}.
    Code written in \Soda can be formally verified using the Lean~\cite{Lean-2013} proof assistant and seamlessly integrated into the Java Virtual Machine ecosystem, enabling efficient execution~\cite{Mendez.Kampik-2025-EUMAS}.
    The \Tiles implementation in \Soda aims to follow the pipeline notation as closely as possible, with each tile's source code directly accessible.

    The framework includes detailed components that ensure the correct construction of pipelines.
    These components are particularly focused on zipping and unzipping sequences, as well as creating and projecting tuples.
    Since \Soda is statically typed and \Tiles is a typed framework, type consistency across the entire pipeline can be verified at compile time.

    \subsection{Concurrent Execution}

    One of the advantages of the framework is its support for concurrent execution.
    However, there is an important caveat: the framework cannot guarantee preserving order, complexity, and concurrency at the same time.

    The reason is that the underlying data structures cannot maintain order during concurrent execution.
    Technically, it is possible to process a sequence by launching multiple threads on different cores, but once those threads finish, their results must be collected back into a sequence. For efficiency, this sequence is constructed in the order in which the threads finish, which may differ from the original order.

    A workaround is to pair each element with its original index before processing. After collecting the results, the sequence can then be sorted by these indices. However, this additional step may affect the overall complexity.

    Table~\ref{tab:primitive-tiles-concurrency} analyzes the primitive tiles and their potential for independent evaluation of elements within a sequence.

    \begin{longtable}{lll}
        \caption{Primitive tiles and concurrent evaluation} \\
        \hline
        \textbf{Function} & \textbf{Independent Evaluation}                          \\
        \hline
        &                                                          \\
        $\text{cross}((a_{i})_{i=1}^{n},(b_{j})_{j=1}^{m})$
        & \makecell[l]{Yes, each of the $n\cdot m$ pairs can be \\ constructed independently.}                  \\
        & \\
        $\text{zip}((a_{i})_{i=1}^{n},(b_{j})_{j=1}^{m})$
        & \makecell[l]{Yes, each of the $\min\{n,m\}$ pairs can be \\ constructed independently.} \\
        &                                                          \\
        $\text{apply}_\varphi(a)$
        & Yes, as it applies to a single element only.
        \\
        &                                                          \\
        $\text{bind}_\varphi((a_{i})_{i=1}^{n})$
        & \makecell[l]{Yes, each element can be mapped \\ independently.}
        \\
        &                                                          \\
        $\text{fold}_\varphi(z,(a_{i})_{i=1}^{n})$
        & \makecell[l]{No, evaluation of each element depends \\ on the results of previous ones.}                  \\
        &                                                          \\
        \hline
        \label{tab:primitive-tiles-concurrency}
    \end{longtable}

    \section{Discussion}
    \label{sec:discussion}

    This section informally discusses the capabilities and limitations of the \AR metamodel and the \Tiles framework.
    The framework is designed around the concept of \emph{flow}, where data moves through a pipeline.
    This pipeline connects tiles, forming a directed acyclic graph with a single start and end point.
    We do not provide an effective algorithm for constructing pipelines from the first-order formula defined by the fairness measure, since this is not possible in the general case.
    In our metamodel, we exclude functions that operate on multiple agents, multiple resources, or combinations thereof.

    \subsection{Expressiveness}

    The \AR metamodel can represent a wide range of fairness scenarios, though not all possible ones.
    We specifically focus on scenarios involving attributes of agents and resources.
    In particular, the metamodel supports the combination of multiple agents and resources, as demonstrated in Example~\ref{ex:preferences}, where an agent or resource serves as the first parameter of a function that returns a sequence.

    \subsection{Decidability and Time Complexity}

    Pipelines built within the \Tiles framework are decidable provided that their tiles are decidable.
    When contracts between tiles are respected, undefined values cannot arise.

    Regarding time complexity, the pipeline structure guarantees the absence of loops in the diagram, ensuring that no tile is evaluated more than once.
    The worst-case complexity of a pipeline is determined by the maximum complexity of its individual tiles.

    \subsection{Applicability}

    The \Tiles framework has broad applicability beyond the fairness domain.
    As a software engineering tool, it can handle any scenario involving finite, iterable sets of identifiers and attributes.
    The type system in \Tiles is flexible and includes sets of identifiers, quantities, Boolean values, tuples, and sequences.

    That said, the \Tiles framework is not intended to be applied at every level of a scenario.
    The framework describes connections and processes to provide clearer explanations of complex formulas.
    However, when formulas are sufficiently clear, they should be used directly instead of pipelines.

    The \Tiles framework is primarily intended for modeling, but the generated pipelines are naturally amenable to parallel execution.
    This is because pipelines are typically designed to process multiple agents, resources, or quantities simultaneously.
    Nevertheless, designing pipelines for parallel execution requires expertise, as it can be effort-intensive.

    \subsection{Limitations}

    A clear limitation of our approach lies in the generic nature of the metamodel.
    To address this, we introduced \Tiles as a layer of modular building blocks.
    Although \Tiles is conceptually elegant, it may, like other declarative notations, present challenges in terms of readability.
    Studying (and potentially improving) the readability of \Tiles is therefore an important direction for future work.
    For example, one could instantiate fairness measures and scenarios in several languages and systematically compare their comprehensibility, or conduct perceived usefulness studies analogous to~\cite{DBLP:journals/sosym/Jalali23}, which empirically compares several declarative modeling languages.

    Furthermore, this work does not address the modeling of \emph{specific} real-world problems.
    Instead, we either use \emph{toy examples} to facilitate understanding, or rely on generally applicable measures, for instance, to showcase broad applicability at the conceptual level or to highlight fundamental relationships (such as strict equity versus equality).
    Future work could evaluate our metamodel and its operationalization in concrete case studies, where real-world users face fairness modeling and analysis challenges.

    \section{Conclusion}
    \label{sec:conclusion}

    We introduced a metamodel of fairness and demonstrated its application through concrete examples.
    The metamodel provides a structured approach to understanding and evaluating fairness across diverse scenarios.
    By first identifying agents, resources, and their relevant attributes, users can construct nuanced fairness measures and instantiate scenarios from them.

    We presented examples illustrating central fairness concepts such as equality, equity, group fairness, and individual fairness, as well as continuous measures, such as the Gini and Theil indices.
    We explored applications of the metamodel ranging from economics and game theory (e.g., preferences and envy-freeness) to computer networks (e.g., Jain's fairness index) and examined a real-world case involving Child Care Subsidy in Australia.

    In addition to the examples, we proposed a method for operationalizing the metamodel that simplifies formal notation.
    The graphical and conceptual representation is supported by a modular design, and the resulting notation for defining functions can be applied across multiple domains.
    Moreover, we provided an implementation of this representation through the \Tiles framework.

    Looking ahead, the fairness framework can be extended and refined through applications in real-world scenarios.
    In this context, it is essential to develop further tooling that simplifies practical modeling and supports broader adoption of metamodel-based fairness analysis.

    \subsubsection*{Acknowledgements}
    This work was partially supported by the Wallenberg AI, Autonomous Systems and Software Program (WASP), funded by the Knut and Alice Wallenberg Foundation.

    \bibliographystyle{LNGAI}
    \bibliography{main}

@article{AlerTubella-2022,
    author = {Aler Tubella, Andrea and Barsotti, Flavia and Ko{\c{c}}er, R{\"u}ya G{\"o}khan and Mendez, Julian Alfredo},
    title = {{Ethical implications of fairness interventions: what might be hidden behind engineering choices?}},
    journal = {Ethics and Information Technology},
    year = {2022},
    month = {02},
    day = {28},
    volume = {24},
    number = {1},
    pages = {12},
    issn = {1572-8439},
    doi = {10.1007/s10676-022-09636-z},
    url = {https://doi.org/10.1007/s10676-022-09636-z}
}

@misc{Lean-2013,
    title = {{Lean}},
    author = {{Leonardo de Moura - Microsoft Research}},
    year = {2013},
    url = {https://github.com/leanprover/lean4}
}

@article{DBLP:journals/sosym/Jalali23,
    author = {Amin Jalali},
    title = {Evaluating user acceptance of knowledge-intensive business process
 modeling languages},
    journal = {Softw. Syst. Model.},
    volume = {22},
    number = {6},
    pages = {1803--1826},
    year = {2023},
    url = {https://doi.org/10.1007/s10270-023-01120-6},
    doi = {10.1007/S10270-023-01120-6},
    bibsource = {dblp computer science bibliography, https://dblp.org}
}

@inproceedings{Mendez.Kampik.Aler.Dignum-2024-SCAI,
    author = {Mendez, Julian Alfredo and Kampik, Timotheus and Aler Tubella, Andrea and Dignum, Virginia},
    title = { {A Clearer View on Fairness: Visual and Formal Representation for Comparative Analysis} },
    booktitle = {14th Scandinavian Conference on Artificial Intelligence, SCAI 2024},
    year = {2024},
    month = {06},
    pages = {112--120},
    editor = {Westphal, Florian and Peretz-Andersson, Einav and Riveiro, Maria and Bach, Kerstin and Heintz, Fredrik},
    organization = {Swedish Artificial Intelligence Society},
    doi = {10.3384/ecp208013},
    url = {https://ecp.ep.liu.se/index.php/sais/article/view/1005/913},
    note = { \url{https://doi.org/10.3384/ecp208013} }
}

@book{Weske-2019,
    author = {Weske, Mathias},
    title = {Business Process Management - Concepts, Languages, Architectures},
    publisher = {Springer Berlin, Heidelberg},
    doi = {10.1007/978-3-662-59432-2},
    url = {https://doi.org/10.1007/978-3-662-59432-2},
    isbn = {978-3-662-59431-5},
    edition = {3},
    year = {2019}
}

@book{Foley-1967-ResourceAllocation,
    author = {Duncan K. Foley},
    title = {Resource allocation and the public sector},
    publisher = {Yale Economics Essays},
    series = {},
    year = {1966}
}

@misc{Amanatidis-2018-ComparingEnvyFreeness,
    title = {Comparing Approximate Relaxations of Envy-Freeness},
    author = {Georgios Amanatidis and Georgios Birmpas and Evangelos Markakis},
    year = {2018},
    eprint = {1806.03114},
    archivePrefix = {arXiv},
    primaryClass = {cs.GT},
    doi = {10.48550/arXiv.1806.03114},
    url = {https://arxiv.org/abs/1806.03114},
}

@misc{Li-2024-PriceEnvyFreeness,
    title = {A Complete Landscape for the Price of Envy-Freeness},
    author = {Zihao Li and Shengxin Liu and Xinhang Lu and Biaoshuai Tao and Yichen Tao},
    year = {2024},
    eprint = {2401.01516},
    archivePrefix = {arXiv},
    primaryClass = {cs.GT},
    doi = {10.48550/arXiv.2401.01516},
    url = {https://arxiv.org/abs/2401.01516}
}

@inproceedings{Binns-2020-ConflictIndividualGroupFairness,
    author = {Binns, Reuben},
    title = {On the apparent conflict between individual and group fairness},
    year = {2020},
    isbn = {9781450369367},
    publisher = {Association for Computing Machinery},
    address = {New York, NY, USA},
    url = {https://doi.org/10.1145/3351095.3372864},
    doi = {10.1145/3351095.3372864},
    booktitle = {Proceedings of the 2020 Conference on Fairness, Accountability, and Transparency},
    pages = {514--524},
    numpages = {11},
    location = {Barcelona, Spain},
    series = {FAT* '20}
}

@article{Ramadan-2025-SSM,
    author = {Ramadan, Qusai and Konersmann, Marco and Ahmadian, Amir Shayan and J{\"u}rjens, Jan and Staab, Steffen},
    title = {MBFair: a model-based verification methodology for detecting violations of individual fairness},
    journal = {Software and Systems Modeling},
    year = {2025},
    month = {02},
    volume = {24},
    number = {1},
    pages = {111--136},
    doi = {10.1007/s10270-024-01184-y},
    url = {https://doi.org/10.1007/s10270-024-01184-y},
    issn = {1619-1374}
}

@misc{Mendez-2023-Soda,
    title = { {Soda: An Object-Oriented Functional Language for Specifying Human-Centered Problems} },
    author = {Mendez, Julian Alfredo},
    year = {2023},
    month = {10},
    eprint = {2310.01961},
    archivePrefix = {arXiv},
    doi = {10.48550/arXiv.2310.01961},
    url = {https://doi.org/10.48550/arXiv.2310.01961},
    note = { \url{https://doi.org/10.48550/arXiv.2310.01961} },
    primaryClass = {cs.PL}
}

@inproceedings{Mendez.Kampik-2025-EUMAS,
    author = {Mendez, Julian Alfredo and Kampik, Timotheus},
    title = { {Can Proof Assistants Verify Multi-agent Systems?} },
    booktitle = {Multi-Agent Systems},
    year = {2025},
    month = {06},
    pages = {323--339},
    editor = {Collier, Rem and Ricci, Alessandro and Nallur, Vivek and Burattini, Samuele and Omicini, Andrea},
    publisher = {Springer Nature Switzerland},
    address = {Cham},
    doi = {10.1007/978-3-031-93930-3_19},
    url = {https://doi.org/10.1007/978-3-031-93930-3_19},
    note = { \url{https://doi.org/10.1007/978-3-031-93930-3_19} }
}

@article{DBLP:journals/tvcg/WexlerPBWVW20,
    author = {James Wexler and
 Mahima Pushkarna and
 Tolga Bolukbasi and
 Martin Wattenberg and
 Fernanda B. Vi{\'{e}}gas and
 Jimbo Wilson},
    title = {{The What-If Tool: Interactive Probing of Machine Learning Models}},
    journal = {{IEEE} Trans. Vis. Comput. Graph.},
    volume = {26},
    number = {1},
    pages = {56--65},
    year = {2020},
    url = {https://doi.org/10.1109/TVCG.2019.2934619},
    doi = {10.1109/TVCG.2019.2934619},
    bibsource = {dblp computer science bibliography, https://dblp.org}
}

@article{DBLP:journals/ibmrd/BellamyDHHHKLMM19,
    author = {Rachel K. E. Bellamy and
 Kuntal Dey and
 Michael Hind and
 Samuel C. Hoffman and
 Stephanie Houde and
 Kalapriya Kannan and
 Pranay Lohia and
 Jacquelyn Martino and
 Sameep Mehta and
 Aleksandra Mojsilovic and
 Seema Nagar and
 Karthikeyan Natesan Ramamurthy and
 John T. Richards and
 Diptikalyan Saha and
 Prasanna Sattigeri and
 Moninder Singh and
 Kush R. Varshney and
 Yunfeng Zhang},
    title = {{AI Fairness 360: An extensible toolkit for detecting and mitigating algorithmic bias}},
    journal = {{IBM} J. Res. Dev.},
    volume = {63},
    number = {4/5},
    pages = {4:1--4:15},
    year = {2019},
    url = {https://doi.org/10.1147/JRD.2019.2942287},
    doi = {10.1147/JRD.2019.2942287},
    bibsource = {dblp computer science bibliography, https://dblp.org}
}

@article{bird2020fairlearn,
    title = {{Fairlearn: A toolkit for assessing and improving fairness in AI}},
    author = {Bird, Sarah and Dud{\'\i}k, Miro and Edgar, Richard and Horn, Brandon and Lutz, Roman and Milan, Vanessa and Sameki, Mehrnoosh and Wallach, Hanna and Walker, Kathleen},
    journal = {Microsoft, Tech. Rep. MSR-TR-2020-32},
    year = {2020}
}

@inproceedings{DBLP:conf/innovations/DworkHPRZ12,
    author = {Cynthia Dwork and
 Moritz Hardt and
 Toniann Pitassi and
 Omer Reingold and
 Richard S. Zemel},
    editor = {Shafi Goldwasser},
    title = {Fairness through awareness},
    booktitle = {Innovations in Theoretical Computer Science 2012, Cambridge, MA, USA,
 January 8-10, 2012},
    pages = {214--226},
    publisher = {{ACM}},
    year = {2012},
    url = {https://doi.org/10.1145/2090236.2090255},
    doi = {10.1145/2090236.2090255},
    bibsource = {dblp computer science bibliography, https://dblp.org}
}

@inproceedings{10.1145/3461702.3462621,
    author = {Fleisher, Will},
    title = {{What's Fair about Individual Fairness?}},
    year = {2021},
    isbn = {9781450384735},
    publisher = {Association for Computing Machinery},
    address = {New York, NY, USA},
    url = {https://doi.org/10.1145/3461702.3462621},
    doi = {10.1145/3461702.3462621},
    booktitle = {Proceedings of the 2021 AAAI/ACM Conference on AI, Ethics, and Society},
    pages = {480--490},
    numpages = {11},
    location = {Virtual Event, USA},
    series = {AIES '21}
}

@inproceedings{AlerTubella-2023,
    author = {Aler Tubella, Andrea and Coelho Mollo, Dimitri and Dahlgren Lindstr\"{o}m, Adam and Devinney, Hannah and Dignum, Virginia and Ericson, Petter and Jonsson, Anna and Kampik, Timotheus and Lenaerts, Tom and Mendez, Julian Alfredo and Nieves, Juan Carlos},
    title = {{ACROCPoLis: A Descriptive Framework for Making Sense of Fairness}},
    year = {2023},
    isbn = {9798400701924},
    publisher = {Association for Computing Machinery},
    address = {New York, NY, USA},
    url = {https://doi.org/10.1145/3593013.3594059},
    doi = {10.1145/3593013.3594059},
    booktitle = {Proceedings of the 2023 ACM Conference on Fairness, Accountability, and Transparency},
    pages = {1014--1025},
    numpages = {12},
    location = {Chicago, IL, USA},
    series = {FAccT '23}
}

@inproceedings{Mendez.Kampik-2025-LNGAI,
  author = {Mendez, Julian Alfredo and Kampik, Timotheus},
  title = {{Specification, Application, and Operationalization of a Metamodel of Fairness}},
  editor = {Liao, Beishui and Rotolo, Antonino and van der Torre, Leendert and Yu, Liuwen},
  booktitle = {{Logics for New-Generation AI 2025}},
  series = {{Logics for New-Generation AI}},
  volume = {5},
  month = {12},
  pages = {163--180},
  year = {2025},
  isbn = {978-1-84890-495-8},
  url = {https://www.collegepublications.co.uk/LNGAI/?00005},
}

@misc{Mendez.Kampik-2025-Specification-arxiv,
    title = {{Specification, Application, and Operationalization of a Metamodel of Fairness}},
    author = {Julian Alfredo Mendez and Timotheus Kampik},
    year = {2025},
    eprint = {2511.11144},
    archivePrefix = {arXiv},
    primaryClass = {cs.CY},
    url = {https://arxiv.org/abs/2511.11144},
    doi = {10.48550/arXiv.2511.11144}
}

@article{Richter.Rubinstein-2020-EconomicTheory,
    title = {The permissible and the forbidden},
    journal = {Journal of Economic Theory},
    volume = {188},
    pages = {105042},
    year = {2020},
    issn = {0022-0531},
    doi = {https://doi.org/10.1016/j.jet.2020.105042},
    url = {https://www.sciencedirect.com/science/article/pii/S0022053120300405},
    author = {Michael Richter and Ariel Rubinstein}
}

@misc{Jain-1984-QuantitativeMeasureOfFairness,
    author = {Jain, Rajendra K. and Chiu, Dah-Ming W. and Hawe, William R.},
    year = {1984},
    title = {{A Quantitative Measure of Fairness and Discrimination for Resource Allocation in Shared Computer Systems}},
    publisher = {DEC Research Report TR-301}
}

@misc{ProPublica-2016,
    author = {Larson, Jeff and Mattu, Surya and Kirchner, Lauren and Angwin, Julia},
    title = {{How We Analyzed the COMPAS Recidivism Algorithm}},
    year = {2016},
    publisher = {ProPublica},
    url = {https://www.propublica.org/article/how-we-analyzed-the-compas-recidivism-algorithm}
}

@article{Albarghouthi-2017,
    author = {Albarghouthi, Aws and D'Antoni, Loris and Drews, Samuel and Nori, Aditya V.},
    title = {FairSquare: Probabilistic Verification of Program Fairness},
    year = {2017},
    issue_date = {October 2017},
    publisher = {Association for Computing Machinery},
    address = {New York, NY, USA},
    volume = {1},
    number = {OOPSLA},
    note = {\url{https://doi.org/10.1145/3133904}},
    url = {https://doi.org/10.1145/3133904},
    doi = {10.1145/3133904},
    journal = {Proc. ACM Program. Lang.},
    month = {10},
    articleno = {80},
    numpages = {30}
}

@inproceedings{Albarghouthi-2019,
    author = {Albarghouthi, Aws and Vinitsky, Samuel},
    title = {Fairness-Aware Programming},
    year = {2019},
    isbn = {9781450361255},
    publisher = {Association for Computing Machinery},
    address = {New York, NY, USA},
    note = {\url{https://doi.org/10.1145/3287560.3287588}},
    url = {https://doi.org/10.1145/3287560.3287588},
    doi = {10.1145/3287560.3287588},
    booktitle = {Proceedings of the Conference on Fairness, Accountability, and Transparency},
    pages = {211--219},
    numpages = {9},
    location = {Atlanta, GA, USA},
    series = {FAT* '19}
}

@inproceedings{Dwork-2012,
    author = {Dwork, Cynthia and Hardt, Moritz and Pitassi, Toniann and Reingold, Omer and Zemel, Richard},
    title = {Fairness through Awareness},
    year = {2012},
    isbn = {9781450311151},
    publisher = {Association for Computing Machinery},
    address = {New York, NY, USA},
    note = {\url{https://doi.org/10.1145/2090236.2090255}},
    url = {https://doi.org/10.1145/2090236.2090255},
    doi = {10.1145/2090236.2090255},
    booktitle = {Proceedings of the 3rd Innovations in Theoretical Computer Science Conference},
    pages = {214--226},
    numpages = {13},
    location = {Cambridge, Massachusetts},
    series = {ITCS '12}
}

@inproceedings{Hardt-2016,
    archivePrefix = {arXiv},
    arxivId = {1610.02413},
    author = {Hardt, Moritz and Price, Eric and Srebro, Nathan},
    booktitle = {Advances in Neural Information Processing Systems},
    eprint = {1610.02413},
    issn = {10495258},
    pages = {3323--3331},
    title = {{Equality of opportunity in supervised learning}},
    year = {2016}
}

@inproceedings{Joseph-2016,
    title = {{Fairness in Learning: Classic and Contextual Bandits}},
    author = {Joseph, Matthew and Kearns, Michael and Morgenstern, Jamie H and Roth, Aaron},
    booktitle = {{Advances in Neural Information Processing Systems}},
    editor = {D. Lee and M. Sugiyama and U. Luxburg and I. Guyon and R. Garnett},
    pages = {325--333},
    publisher = {Curran Associates, Inc.},
    volume = {29},
    year = {2016},
    series = {NIPS},
    url = {https://proceedings.neurips.cc/paper_files/paper/2016/file/eb163727917cbba1eea208541a643e74-Paper.pdf}
}

@inproceedings{Kearns-2018,
    title = {Preventing fairness gerrymandering: Auditing and learning for subgroup fairness},
    author = {Kearns, Michael and Neel, Seth and Roth, Aaron and Wu, Zhiwei Steven},
    booktitle = {International Conference on Machine Learning},
    pages = {2564--2572},
    year = {2018},
    organization = {PMLR}
}

@article{CorbettDavies-2018,
    author = {Sam Corbett{-}Davies and Sharad Goel},
    title = {The Measure and Mismeasure of Fairness: {A} Critical Review of Fair Machine Learning},
    journal = {CoRR},
    volume = {abs/1808.00023},
    year = {2018},
    url = {http://arxiv.org/abs/1808.00023},
    eprinttype = {arXiv},
    eprint = {1808.00023},
    bibsource = {dblp computer science bibliography, https://dblp.org}
}

@article{Chouldechova-2017,
    author = {Chouldechova, Alexandra},
    title = {Fair Prediction with Disparate Impact: A Study of Bias in Recidivism Prediction Instruments},
    journal = {Big Data},
    volume = {5},
    number = {2},
    pages = {153-163},
    year = {2017},
    doi = {10.1089/big.2016.0047},
    note = {PMID: 28632438},
    url = {https://doi.org/10.1089/big.2016.0047},
    eprint = {https://doi.org/10.1089/big.2016.0047}
}

@misc{Binns-2019-IndividualGroupFairness,
    title = {On the Apparent Conflict Between Individual and Group Fairness},
    author = {Reuben Binns},
    year = {2019},
    month = {12},
    eprint = {1912.06883},
    archivePrefix = {arXiv},
    primaryClass = {cs.LG},
    url = {https://arxiv.org/abs/1912.06883},
    doi = {10.48550/arXiv.1912.06883}
}

@article{MussardSeyteTerraza-2003-GeneralizedEntropy,
    title = {Decomposition of Gini and the Generalized Entropy Inequality Measures},
    author = {Mussard, St\'{e}phane and Seyte, Fran\c{c}oise and Terraza, Michel},
    journal = {Economics Bulletin},
    volume = {4},
    number = {7},
    pages = {1--6},
    year = {2003},
    url = {https://accessecon.com/pubs/EB/2003/Volume4/EB-03D30001A.pdf},
    note = {Submitted: January 27, 2003. Accepted: January 27, 2003.}
}

@article{Dagum-1997-GeneralizedEntropyInequalityMeasures,
    title = {Decomposition and interpretation of Gini and the generalized entropy inequality measures},
    volume = {57},
    url = {https://rivista-statistica.unibo.it/article/view/1060},
    doi = {10.6092/issn.1973-2201/1060},
    number = {3},
    journal = {Statistica},
    author = {Dagum, Camilo},
    year = {1997},
    month = {01},
    pages = {295--308}
}

@article{Tsekouras_Tsallis-2005-GeneralizedEntropy,
    title = {Generalized entropy arising from a distribution of $q$ indices},
    author = {Tsekouras, G. A. and Tsallis, Constantino},
    journal = {Phys. Rev. E},
    volume = {71},
    issue = {4},
    pages = {046144},
    numpages = {8},
    year = {2005},
    month = {04},
    publisher = {American Physical Society},
    doi = {10.1103/PhysRevE.71.046144},
    url = {https://link.aps.org/doi/10.1103/PhysRevE.71.046144}
}

@book{Osborne.Rubinstein-2020-Models,
    author = {Osborne, Martin J. and Rubinstein, Ariel},
    title = {Models in Microeconomic Theory},
    publisher = {Open Book Publishers},
    year = {2020},
    doi = {10.11647/OBP.0211},
    url = {https://doi.org/10.11647/OBP.0211}
}

@article{Conceicao.Ferreira-2000-Theil,
    author = {Conceicao, Pedro and Ferreira, Pedro M.},
    title = {The Young Person's Guide to the Theil Index: Suggesting Intuitive Interpretations and Exploring Analytical Applications},
    journal = {SSRN Electronic Journal},
    year = {2000},
    doi = {10.2139/ssrn.228703},
    issn = {1556-5068},
    s2cid = {19009769}
}

@inproceedings{Kerkhoven-2025-Assessing,
    title = {Assessing machine learning fairness with multiple contextual norms},
    author = {Pim Kerkhoven and Virginia Dignum and Monowar Bhuyan},
    booktitle = {The 37th Benelux Conference on Artificial Intelligence and the 34th Belgian Dutch Conference on Machine Learning},
    year = {2025},
    pages = {1--18},
    url = {https://openreview.net/forum?id=X1lBR3RKRH}
}

    \clearpage
    \appendix

    \section{Appendix}
    \label{sec:appendix}

    In this appendix, we show examples of configurations for different fairness measures presented in this paper.

    \begin{figure*}[ht!]
        \centering
        \resizebox{\textwidth}{!}{
            \begin{tikzpicture}[x=1mm, y=1mm, box/.style={rectangle, draw, rounded corners=2mm, minimum width=4mm, minimum height=8mm, align=center}]
            \node [box] (S) {};
            \node [box, right=4mm of S] (allagent) {$\mathsf{\ all\text{-}agent \ \Big| \ _{(a)} \ accumulates \ \Big| \ _{(m)} \ distinct \ \Big| \ _{(m)}  \ length \ \Big| \ _{m} \ apply \ m=1 \ _{b}} $};
            \draw [->] (S.east) -- (allagent.west);
            \end{tikzpicture}
        }

        \caption{Pipeline for equality for a Child Care Subsidy using \Tiles.
        This uses generic tiles.}
        \label{fig:pipeline-equality-length}
    \end{figure*}

    \begin{figure*}[ht!]
        \centering
        \resizebox{\textwidth}{!}{
            \begin{tikzpicture}[x=1mm, y=1mm, scale=\pipelineScale, box/.style={rectangle, draw, rounded corners=2mm, minimum width=4mm, minimum height=8mm, align=center}]
            \node [box] (S) {};
            \node [box, right=6mm of S] (allagent0) {$\mathsf{\ all\text{-}agent \ \Big| \ _{(a)} \ accumulates \ _{(m_{0})}} $};
            \node [box, right=10mm of allagent0] (allagent1) {$\mathsf{\ all\text{-}agent \ \Big| \ _{(a)} \ needs \ _{(m_{1})}}$};
            \node [box, below=8mm of allagent0, xshift=8mm] (zip) {$\mathsf{_{(m_{0}), (m_{1})} \ zip \ \Big| \ _{(\langle m_{0}, m_{1}\rangle )} \ forall \ (m_{0} \geq m_{1}) \ _{b}} $};

            \draw [->] (S.east) -- ++(2mm, 0mm) |- (allagent0.west);
            \draw [->] (S.east) -- ++(2mm, 0mm) -- ++(0mm, 7mm) -- ++(59mm, 0mm) |-  (allagent1.west);
            \draw [->] (allagent0.east) -- ++(2mm, 0mm) -| ++(0mm, -7mm) -- ++(-56mm, 0mm) |- (zip.west);
            \draw [->] (allagent1.east) -- ++(2mm, 0mm) -| ++(0mm, -9mm) -- ++(-103mm, 0mm) |- (zip.west);
            \end{tikzpicture}
        }
        \caption{Pipeline for equity for a Child Care Subsidy using \Tiles.
        This diagram uses \tilesfun{filter}, \tilesfun{length}, and \tilesfun{apply}.}
        \label{fig:pipeline-equity-length}
    \end{figure*}

    \begin{figure*}[ht!]
        \centering
        \resizebox{\textwidth}{!}{
            \begin{tikzpicture}[x=1mm, y=1mm, scale=\pipelineScale, box/.style={rectangle, draw, rounded corners=2mm, minimum width=4mm, minimum height=8mm, align=center}]

            \node[box] (S) {};
            \node[box, right=10mm of S] (allagent0) {$\mathsf{all\text{-}agent \ _{(a)}}$};
            \node[box, below=4mm of allagent0, xshift=2mm] (allresource0) {$\mathsf{all\text{-}resource \ _{(r)}}$};
            \node[box, right=8mm of allagent0] (filter0) {$\mathsf{_{(a),(r) } \ cross \ \Big| \ _{(\langle a, r \rangle)} \ filter \ \mathnormal{p}(a) \land a \text{ receives } r \ \Big| \ _{(\langle a, r \rangle)} \ length \ _{m_{0}}}$};
            \node[box, below=6mm of allresource0, xshift=18mm] (allagent1) {$\mathsf{all\text{-}agent \ \Big| \ _{(a)}  \ filter \ \mathnormal{p}(a) \ \Big| \ _{(a)} \ length \ _{m_{1}}}$};
            \node[box, right=6mm of allagent1] (rpositive) {$\mathsf{_{m_{0}, m_{1}} \ apply \ \frac{m_{0}}{m_{1}} \ _{m_{2}}}$};
            \node[box, below=20mm of allresource0, xshift=-2mm] (allagent2) {$\mathsf{all\text{-}agent \ _{(a)}}$};
            \node[box, below=4mm of allagent2, xshift=2mm] (allresource1) {$\mathsf{all\text{-}resource \ _{(r)}}$};
            \node[box, right=8mm of allagent2] (filter1) {$\mathsf{_{(a),(r) } \ cross \ \Big| \ _{(\langle a, r \rangle)} \ filter \ \lnot \mathnormal{p}(a) \land a \text{ receives } r \ \Big| \ _{(\langle a, r \rangle)} \ length \ _{m_{0}}}$};
            \node[box, below=6mm of allresource1, xshift=19mm] (allagent3) {$\mathsf{all\text{-}agent \ \Big| \ _{(a)}  \ filter \ \lnot \mathnormal{p}(a) \ \Big| \ _{(a)} \ length \ _{m_{1}}}$};
            \node[box, right=6mm of allagent3] (rnegative) {$\mathsf{_{m_{0}, m_{1}} \ apply \ \frac{m_{0}}{m_{1}} \ _{m_{3}}}$};
            \node[box, below=6mm of rnegative] (result) {$\mathsf{_{m_{2}, m_{3}} \ apply \ m_{2} \simeq_{\varepsilon} m_{3} \ _{b}}$};

            \draw [->] (S.east) -- (allagent0.west);
            \draw [->] (S.east) -- ++(3mm, 0mm) |- (allagent1.west);
            \draw [->] (S.east) -- ++(3mm, 0mm) |- (allagent2.west);
            \draw [->] (S.east) -- ++(3mm, 0mm) |- (allagent3.west);
            \draw [->] (S.east) -- ++(3mm, 0mm) |- (allresource0.west);
            \draw [->] (S.east) -- ++(3mm, 0mm) |- (allresource1.west);
            \draw [->] (allagent0.east) -- (filter0.west);
            \draw [->] (allresource0.east) -- ++(2mm, 0mm) |- (filter0.west);
            \draw [->] (allagent2.east) -- (filter1.west);
            \draw [->] (allresource1.east) -- ++(2mm, 0mm) |- (filter1.west);
            \draw [->] (allagent1.east) -- (rpositive.west);
            \draw [->] (allagent3.east) -- (rnegative.west);
            \draw [->] (filter0.east) -- ++(2mm, 0mm) -| ++(0mm, -18mm) -- ++(-60mm, 0mm) |- (rpositive.west);
            \draw [->] (filter1.east) -- ++(2mm, 0mm) -| ++(0mm, -18mm) -- ++(-60mm, 0mm) |- (rnegative.west);
            \draw [->] (rpositive.east) -- ++(33mm, 0mm) -| ++(0mm, -48mm) -- ++(-68mm, 0mm) |- (result.west);
            \draw [->] (rnegative.east) -- ++(2mm, 0mm) -| ++(0mm, -6mm) -- ++(-42mm, 0mm) |- (result.west);
            \end{tikzpicture}
        }
        \caption{Pipeline for group fairness with respect to a protected attribute $p$.}
        \label{fig:pipeline-group-fairness}
    \end{figure*}

    \begin{figure*}[ht!]
        \centering
        \resizebox{\textwidth}{!}{
            \begin{tikzpicture}[x=1mm, y=1mm, scale=\pipelineScale, box/.style={rectangle, draw, rounded corners=2mm, minimum width=4mm, minimum height=8mm, align=center}]

            \node[box] (S) {};
            \node[box, right=6mm of S] (allagent0) {$\mathsf{all\text{-}agent \ _{(a_{0})}}$};
            \node[box, below=4mm of allagent0] (allagent1) {$\mathsf{all\text{-}agent \ _{(a_{1})}}$};
            \node[box, below=4mm of allagent1, xshift=1mm] (allresource) {$\mathsf{all\text{-}resource \ _{(r)}}$};
            \node[box, right=4mm of allagent0] (filter0) {$\mathsf{_{(a_{0}), (a_{1})} \ cross \ \Big| \ _{(\langle a_{0}, a_{1} \rangle)} \ filter \ a_{0} \neq a_{1} \land \mathnormal{q}(a_{0}) = \mathnormal{q}(a_{1})\ _{(\langle a_{0}, a_{1} \rangle)} }$};
            \node[box, below=30mm of filter0, xshift=-15mm] (filter1) {$\mathsf{_{(\langle a_{0}, a_{1}\rangle), (r)} \ cross \ \Big| \ _{(\langle \langle a_{0}, a_{1} \rangle , r \rangle)} \ forall \ \left( \makecell[l]{\mathsf{a_{0} \text{ receives } r \land a_{1} \text{ receive } r \ \lor} \\ \mathsf{\lnot (a_{0} \text{ receive } r) \land \lnot (a_{1} \text{ receives } r) } } \right) \ _{b} }$};

            \draw [->] (S.east) -- (allagent0.west);
            \draw [->] (S.east) -- ++(3mm, 0mm) |- (allagent1.west);
            \draw [->] (S.east) -- ++(3mm, 0mm) |- (allresource.west);
            \draw [->] (allagent0.east) -- (filter0.west);
            \draw [->] (filter0.east) -- ++(2mm, 0mm) -- ++(0mm, -32mm) -- ++(-119mm, 0mm) |- (filter1.west);
            \draw [->] (allagent1.east)  -- ++(2mm, 0mm) |- (filter0.west);
            \draw [->] (allresource.east) -- ++(2mm, 0mm) -- ++(0mm, -6mm) -- ++(-30mm, 0mm) |- (filter1.west);
            \end{tikzpicture}
        }
        \caption{Pipeline for individual fairness, considering an essential attribute $q$.}
        \label{fig:pipeline-individual-fairness}
    \end{figure*}

    \begin{figure*}[ht!]
        \centering
        \begin{tikzpicture}[x=1mm, y=1mm, scale=\pipelineScale, box/.style={rectangle, draw, rounded corners=2mm, minimum width=4mm, minimum height=8mm, align=center}]

            \node[box] (S) {};
            \node[box, right=6mm of S] (allagent0) {$\mathsf{all\text{-}agent \ \Big| \ _{(a)} \ accumulates \ \Big| \ _{(m)} \ sum \ \Big| \ _{m} \ apply \ m^{2} \ _{m_{0}} }$};
            \node[box, below=4mm of allagent0, xshift=1mm] (allagent1) {$\mathsf{all\text{-}agent \ \Big| \ _{(a)} \ accumulates \ \Big| \ _{(m)} \ map \ m^{2} \ \Big| \ _{(m)} \ sum \ _{m_{1}} }$};
            \node[box, below=8mm of allagent1, xshift=-22mm] (allagent2) {$\mathsf{all\text{-}agent \ \Big| \ _{(a)} \ length \ _{m_{2}} }$};
            \node[box, right=8mm of allagent2] (apply) {$\mathsf{_{m_{0}, m_{1}, m_{2}} \ apply \ \frac{m_{0}}{m_{2} \cdot m_{1}} \ _{m}}$};

            \draw [->] (S.east) -- (allagent0.west);
            \draw [->] (S.east) -- ++(3mm, 0mm) |- (allagent1.west);
            \draw [->] (S.east) -- ++(3mm, 0mm) |- (allagent2.west);
            \draw [->] (allagent0.east) -- ++(6mm, 0mm) -- ++(0mm, -24mm) -- ++(-43mm, 0mm) |- (apply.west);
            \draw [->] (allagent1.east) -- ++(2mm, 0mm) -- ++(0mm, -8mm) -- ++(-43mm, 0mm)  |- (apply.west);
            \draw [->] (allagent2.east) -- (apply.west);
        \end{tikzpicture}
        \caption{Pipeline for Jain's index.}
        \label{fig:pipeline-jain}
    \end{figure*}

    \begin{figure*}[ht!]
        \centering
        \begin{tikzpicture}[x=1mm, y=1mm, scale=\pipelineScale, box/.style={rectangle, draw, rounded corners=2mm, minimum width=4mm, minimum height=8mm, align=center}]

            \node[box] (S) {};
            \node[box, right=6mm of S] (allagent0) {$\mathsf{all\text{-}agent \ \Big| \ _{(a)} \ accumulates \ _{(m_{0})} }$};
            \node[box, below=4mm of allagent0] (allagent1) {$\mathsf{all\text{-}agent \ \Big| \ _{(a)} \ accumulates \ _{(m_{1})} }$};
            \node[box, below=6mm of allagent1, xshift=21mm] (cross) {$\mathsf{ \ _{(m_{0}),(m_{1})} \ cross \ \Big| \ _{(\langle m_{0}, m_{1} \rangle)} \ map \ |m_{0} - m_{1}| \ \Big| \ _{(m_{2})} \ sum \ _{m_{3}} }$};
            \node[box, below=20mm of allagent1, xshift=7mm] (allagent2) {$\mathsf{all\text{-}agent \ \Big| \ _{(a)} \ accumulates \ \Big| \ _{(m)} \ sum \ _{m_{4}} }$};
            \node[box, below=8mm of allagent2, xshift=-12mm] (allagent3) {$\mathsf{all\text{-}agent \ \Big| \ _{(a)} \ length \ _{m_{5}} }$};
            \node[box, right=8mm of allagent3] (apply) {$\mathsf{_{m_{3}, m_{4}, m_{5}} \ apply \ 1 - \frac{m_{3}}{2 \cdot m_{5} \cdot m_{4}} \ _{m}}$};

            \draw [->] (S.east) -- (allagent0.west);
            \draw [->] (allagent0.east) -- ++(5mm, 0mm) -- ++(0mm, -21mm) -- ++(-54mm, 0mm) |- (cross.west);
            \draw [->] (allagent1.east) -- ++(3mm, 0mm) -- ++(0mm, -6mm) -- ++(-53mm, 0mm) |- (cross.west);
            \draw [->] (S.east) -- ++(3mm, 0mm) |- (allagent1.west);
            \draw [->] (S.east) -- ++(3mm, 0mm) |- (allagent2.west);
            \draw [->] (S.east) -- ++(3mm, 0mm) |- (allagent3.west);
            \draw [->] (cross.east) -- ++(4mm, 0mm) -- ++(0mm, -24mm) -- ++(-48mm, 0mm) |- (apply.west);
            \draw [->] (allagent2.east) -- ++(2mm, 0mm) -- ++(0mm, -8mm) -- ++(-23mm, 0mm)  |- (apply.west);
            \draw [->] (allagent3.east) -- (apply.west);
        \end{tikzpicture}
        \caption{Pipeline for the complement of the Gini index.
        Unlike the Gini index, 1 represents complete fairness and 0 represents complete unfairness.}
        \label{fig:pipeline-gini-index}
    \end{figure*}

    \begin{figure*}[ht!]
        \centering
        \resizebox{\textwidth}{!}{
            \begin{tikzpicture}[x=1mm, y=1mm, scale=\pipelineScale, box/.style={rectangle, draw, rounded corners=2mm, minimum width=4mm, minimum height=8mm, align=center}]

            \node[box] (S) {};
            \node[box, right=8mm of S] (allagent0) {$\mathsf{all\text{-}agent \ \Big| \ _{(a)} \ accumulates \ \Big| \ _{(m)} \ sum \ _{m_{1}} }$};
            \node[box, below=6mm of allagent0, xshift=-12mm] (allagent1) {$\mathsf{all\text{-}agent \ \Big| \ _{(a)} \ length \ _{m_{2}} }$};
            \node[box, below=6mm of allagent1, xshift=6mm] (allagent2) {$\mathsf{all\text{-}agent \ \Big| \ _{(a)} \ accumulates \ _{(m_{0})} }$};
            \node[box, below=8mm of allagent2, xshift=21mm] (map) {$\mathsf{ \ _{(m_{0}); m_{1}, m_{2}} \ map \ \frac{m_{0}\cdot m_{2}}{m_{1}} \cdot \ln (\frac{m_{0}\cdot m_{2}}{m_{1}}) \ \Big| \ _{(m_{3});m_{2}} \ sum \ _{m_{4}} \ \Big| \ _{m_{4}, m_{2}} \ apply \ (1 - \frac{m_{4}}{m_{2}}) \ _{m} } $};

            \draw [->] (S.east) -- ++(3mm, 0mm) |- (allagent0.west);
            \draw [->] (allagent0.east) -- ++(5mm, 0mm) -- ++(0mm, -39mm) -- ++(-80mm, 0mm) |- (map.west);
            \draw [->] (allagent1.east) -- ++(25mm, 0mm) -- ++(0mm, -22mm) -- ++(-78mm, 0mm) |- (map.west);
            \draw [->] (allagent2.east) -- ++(3mm, 0mm) -- ++(0mm, -6mm) -- ++(-69mm, 0mm) |- (map.west);
            \draw [->] (S.east) -- ++(3mm, 0mm) |- (allagent1.west);
            \draw [->] (S.east) -- ++(3mm, 0mm) |- (allagent2.west);

            \end{tikzpicture}
        }
        \caption{Pipeline for the complement of the Theil index, where 1 represents complete fairness and 0 represents complete unfairness. Notice the notation with semicolon in the parameters. The parameters $\mathsf{m_{1}}$ and $\mathsf{m_{2}}$ are constant with respect to the $\mathsf{map}$ tile, which uses them. The parameter $\mathsf{m_{2}}$ is not used by the $\mathsf{sum}$ tile, but it is passed to the following tile. In both cases, the parameters after the semicolon are not part of the basic parameters of the tiles, but remain constant, regardless of whether they are used by the tile.}
        \label{fig:pipeline-theil-index}
    \end{figure*}

    \begin{figure*}[ht!]
        \centering
        \resizebox{\textwidth}{!}{
            \begin{tikzpicture}[x=1mm, y=1mm, scale=\pipelineScale, box/.style={rectangle, draw, rounded corners=2mm, minimum width=4mm, minimum height=8mm, align=center}]

            \node[box] (S) {};
            \node[box, right=6mm of S] (allagent0) {$\mathsf{all\text{-}agent \ \Big| \ _{(a)} \ map \ \mathnormal{[\langle \mathsf{a}, R_{\mathrm{high}} \rangle \in O \land res (\mathsf{a}) \neq R_{\mathrm{high}}]} \ _{(m_{0})} }$};
            \node[box, below=4mm of allagent0, xshift=-20mm] (allagent1) {$\mathsf{all\text{-}agent \ \Big| \ _{(a)} \ map \ [ \mathnormal{p} (a) ] \ _{(m_{1})}} $};

            \node[box, right=6mm of allagent1] (corr) {$\mathsf{_{(m_{0}), (m_{1})} \ correlation \ \Big| \ _{m_{2}} \ \mathsf{apply} \ |m_{2}| \ _{m}}$};

            \draw [->] (S.east) -- (allagent0.west);
            \draw [->] (S.east) -- ++(3mm, 0mm) |- (allagent1.west);
            \draw [->] (allagent0.east) -- ++(4mm, 0mm) -- ++(0mm, -6mm) -- ++(-40mm, 0mm) |- (corr.west);
            \draw [->] (allagent1.east) -- (corr.west);

            \end{tikzpicture}
        }
        \caption{Pipeline for the detection of false positives in a prediction system, where $pred$ is the prediction, $res$ is the actual fact, $R_{\mathrm{high}}$ represents the value to detect positiveness, and $p$ is a protected attribute.}
        \label{fig:pipeline-prediction}
    \end{figure*}

    \begin{figure*}[ht!]
        \centering
        \resizebox{\textwidth}{!}{
            \begin{tikzpicture}[x=1mm, y=1mm, scale=\pipelineScale, box/.style={rectangle, draw, rounded corners=2mm, minimum width=4mm, minimum height=8mm, align=center}]

            \node[box] (S) {$\mathsf{ \ _{(m_{0}),(m_{1})} \ }$};
            \node[box, below=4mm of S, xshift=52mm] (sum0) {$\mathsf{ \ _{(m_{0}),(m_{1}); m_{2} ; m_{3}} \ zip \  \ \Big| \ _{(\langle m_{0}, m_{1} \rangle); m_{2}; m_{3}} \ sum \ (m_{0} - m_{2}) \cdot (m_{1} - m_{3}) \ _{m_{4}} }$};
            \node[box, below=20mm of S, xshift=15mm] (average0) {$\mathsf{ \ _{(m_{0})} \ average \ _{m_{2}} }$};
            \node[box, right=6mm of average0] (sum1) {$\mathsf{ \ _{(m_{0});m_{2}} \ sum \ (m_{0} - m_{2})^{2} \ _{m_{5}} }$};
            \node[box, below=6mm of average0] (average1) {$\mathsf{ \ _{(m_{1})} \ average \ _{m_{3}} }$};
            \node[box, right=6mm of average1] (sum2) {$\mathsf{ \ _{(m_{1});m_{3}} \ sum \ (m_{1} - m_{3})^{2} \ _{m_{6}} }$};
            \node[box, below=24mm of sum1] (apply) {$\mathsf{ \ _{m_{4}, m_{5}, m_{6}} \ apply \ (\frac{m_{4}}{\sqrt{m_{5}} \cdot \sqrt{m_{6}}}) \ _{m} }$};

            \draw [->] (S.east) -- ++(3mm, 0mm) -- ++(0mm, -6mm) -- ++(-19mm, 0mm) |- (sum0.west);
            \draw [->] (S.east) -- ++(3mm, 0mm) -- ++(0mm, -6mm) -- ++(-19mm, 0mm) -- ++(0mm, -28mm) -- ++(38mm, 0mm) |- (sum1.west);
            \draw [->] (S.east) -- ++(3mm, 0mm) -- ++(0mm, -6mm) -- ++(-19mm, 0mm) -- ++(0mm, -42mm) -- ++(38mm, 0mm) |- (sum2.west);
            \draw [->] (S.east) -- ++(3mm, 0mm) -- ++(0mm, -6mm) -- ++(-19mm, 0mm) |- (average0.west);
            \draw [->] (S.east) -- ++(3mm, 0mm) -- ++(0mm, -6mm) -- ++(-19mm, 0mm) |- (average1.west);
            \draw [->] (average0.east) -- ++(3mm, 0mm) -- ++(0mm, 6mm) -- ++(-35mm, 0mm) |- (sum0.west);
            \draw [->] (average0.east) -- (sum1.west);
            \draw [->] (average1.east) -- ++(3mm, 0mm) -- ++(0mm, 6mm) -- ++(-36mm, 0mm) |- (sum0.west);
            \draw [->] (average1.east) -- (sum2.west);
            \draw [->] (sum0.east) -- ++(4mm, 0mm) -- ++(0mm, -41mm) -- ++(-75mm, 0mm) |- (apply.west);
            \draw [->] (sum1.east) -- ++(6mm, 0mm) -- ++(0mm, -24mm) -- ++(-56mm, 0mm) |- (apply.west);
            \draw [->] (sum2.east)-- ++(4mm, 0mm) -- ++(0mm, -8mm) -- ++(-56mm, 0mm) |- (apply.west);
            \end{tikzpicture}
        }
        \caption{Component that computes the correlation, where the sequences $\mathsf{(m_{0}) \text{ and } (m_{1})}$ are the parameters of the component.
        The averages of $\mathsf{(m_{0}) \text{ and } (m_{1})}$ are stored in $\mathsf{m_{2}}$ and $\mathsf{m_{3}}$ respectively, which are computed once but used twice.
        The \tilesfun{zip} tile does not need $\mathsf{m_{2}}$ and $\mathsf{m_{3}}$, but it accepts them with the semicolon notation to pass them on to the \tilesfun{sum} tile.}
        \label{fig:pipeline-correlation}
    \end{figure*}

    \begin{figure*}[ht!]
        \centering
        \begin{tikzpicture}[x=1mm, y=1mm, box/.style={rectangle, draw, rounded corners=2mm, minimum width=4mm, minimum height=8mm, align=center}]

            \node [box] (S) {};
            \node [box, right=4mm of S] (allagent) {$\mathsf{\ all\text{-}agent \ \Big| \ _{(a)} \ accumulates \ \Big| \ _{(m)} \ forall \ m = 0 \ _{b}} $};
            \draw [->] (S.east) -- (allagent.west);

        \end{tikzpicture}
        \caption{Pipeline for no subsidy.
        The tile on the left provides all agents.
        The tile in the middle computes how much resource each agent received.
        The tile on the right checks that all resources are equal to 0.}
        \label{fig:no-subsidy}
    \end{figure*}

    \begin{figure*}[ht!]
        \centering
        \begin{tikzpicture}[x=1mm, y=1mm, scale=\pipelineScale, box/.style={rectangle, draw, rounded corners=2mm, minimum width=4mm, minimum height=8mm, align=center}]

            \node [box] (S) {};
            \node[box, right=6mm of S] (accumulates) {$\mathsf{all\text{-}agent \ \Big| \ _{(a)} \ accumulates \ _{(m_{0})}}$};
            \node[box, below=6mm of accumulates, xshift=4mm] (children)  {$\mathsf{all\text{-}agent \ \Big| \ _{(a)} \ map \ \mathnormal{children}(a) \ _{(m_{1})}}$};
            \node[box, below=22mm of accumulates, xshift=13mm] (allequal) {$\mathsf{_{(m_{0}), (m_{1})} \ zip \ \Big| \ _{(\langle m_{0}, m_{1} \rangle )} \  map \ \frac{m_{0}}{m_{1}} \ \Big| \ _{(m)} \ all\text{-}equal \ _{b} }$};
            \draw [->] (S.east)  -- (accumulates.west);
            \draw [->] (S.east) -- ++(3mm, 0mm) |- (children.west);
            \draw [->] (accumulates.east) -- ++(13mm, 0mm) -- ++(0mm, -23mm) -- ++(-63mm, 0mm) |- (allequal.west);
            \draw [->] (children.east) -- ++(3mm, 0mm) -- ++(0mm, -7mm) -- ++(-63mm, 0mm) |- (allequal.west);

        \end{tikzpicture}
        \caption{Representation of ``per child'' using \Tiles.
        The tile on the left provides agents, which are divided in two branches.
        The upper branch computes how much each agent (a family) has received and the lower branch how many children the family has.
        Both values are zipped back to compute the division.
        Note that we assume that each family has at least a child, but otherwise, if the number of children is 0, the division would be computed as undefined.}
        \label{fig:per-child}
    \end{figure*}

    \begin{figure*}[ht!]
        \centering
        \begin{tikzpicture}[x=1mm, y=1mm, box/.style={rectangle, draw, rounded corners=2mm, minimum width=4mm, minimum height=8mm, align=center}]

            \node [box] (S) {};
            \node [box, right=4mm of S] (allagent) {$\mathsf{\ all\text{-}agent \ \Big| \ _{(a)} \ accumulates \ \Big| \ _{(m)} \ all\text{-}equal \ _{b}} $};
            \draw [->] (S.east) -- (allagent.west);

        \end{tikzpicture}
        \caption{Representation of ``per family'' using \Tiles.
        This is equivalent to a standard equality pipeline where each agent receives exactly the same amount of resource.}
        \label{fig:per-family}
    \end{figure*}

    \begin{figure*}[ht!]
        \centering
        \resizebox{\textwidth}{!}{
            \begin{tikzpicture}[x=1mm, y=1mm, scale=\pipelineScale, box/.style={rectangle, draw, rounded corners=2mm, minimum width=4mm, minimum height=8mm, align=center}]

            \node [box] (S) {};
            \node[box, right=6mm of S] (single) {$\mathsf{all\text{-}agent \ \Big| \ _{(a)} \ filter \ \mathnormal{adults} (a) = 1 \ \Big| \ _{(a)} \ accumulates \ \Big| \ _{(m)} \ all\text{-}equal \ _{b_{0}}}$};
            \node[box, below=6mm of single, xshift=3mm] (multiple) {$\mathsf{all\text{-}agent \ \Big| \ _{(a)} \ filter \ \mathnormal{adults} (a) > 1 \ \Big| \ _{(a)} \ accumulates \ \Big| \ _{(m)} \ forall \ m = 0 \ _{b_{1}}}$};
            \node[box, below=8mm of multiple, xshift=30mm] (apply) {$\mathsf{_{b_{0}, b_{1}} \  apply \ b_{0} \land b_{1} \ _{b} }$};
            \draw [->] (S.east)  -- (single.west);
            \draw [->] (S.east) -- ++(3mm, 0mm) |- (multiple.west);
            \draw [->] (single.east) -- ++(14mm, 0mm) -- ++(0mm, -23mm) -- ++(-56mm, 0mm) |- (apply.west);
            \draw [->] (multiple.east) -- ++(3mm, 0mm) -- ++(0mm, -7mm) -- ++(-53mm, 0mm) |- (apply.west);
        \end{tikzpicture}
        }
        \caption{Representation of ``single guardian'' using \Tiles.
        This pipeline has two main branches.
        The upper branch accepts only families with one adult, i.e., single-parent/guardian families.
        The lower branch accepts all remaining families.
        It is worth noting that the sequences in both branches may have different number of elements and cannot be zipped back.
        On the other hand, the Boolean computation is combined with the `and' tile, on the right.}
        \label{fig:single-guardian}
    \end{figure*}

\end{document}